\begin{document}
\newtheorem{ach}{Achievability}
\newtheorem{con}{Converse}
\newtheorem{definition}{Definition}
\newtheorem{theorem}{Theorem}
\newtheorem{lemma}{Lemma}
\newtheorem{example}{Example}
\newtheorem{cor}{Corollary}
\newtheorem{prop}{Proposition}
\newtheorem{conjecture}{Conjecture}
\newtheorem{remark}{Remark}
\title{Fundamental Limits of Spectrum Sharing Full-Duplex Multicell Networks}
\author{\IEEEauthorblockN{Sung Ho Chae,~\IEEEmembership{Member,~IEEE}, Sang-Woon Jeon,~\IEEEmembership{Member,~IEEE},\\and Sung Hoon Lim,~\IEEEmembership{Member,~IEEE}
}
\thanks{S. H. Chae is with Samsung Electronics, Suwon, South Korea (e-mail: sho.chae00@gmail.com).}
\thanks{S.-W. Jeon, the corresponding author, is with the Department of Information and Communication Engineering, Andong National University, Andong, South Korea (e-mail: swjeon@anu.ac.kr).}%
\thanks{S. H. Lim is with the School of Computer and Communication Sciences,
Ecole Polytechnique F{\'e}d{\'e}rale de Lausanne (EPFL), Lausanne,
Switzerland (e-mail: sung.lim@epfl.ch).}}%
 \maketitle

\begin{abstract}
This paper studies the degrees of freedom of full-duplex multicell networks that share the spectrum among multiple cells in a non-orthogonal setting. In the considered network, we assume that {\em full-duplex} base stations with multiple transmit and receive antennas communicate with multiple single-antenna mobile users. By spectrum sharing among multiple cells and (simultaneously) enabling full-duplex radio, the network can utilize the spectrum more flexibly, but, at the same time, the network is subject to multiple sources of interference compared to a network with separately dedicated bands for distinct cells and uplink--downlink traffic. Consequently, to take advantage of the additional freedom in utilizing the spectrum, interference management is a crucial ingredient.  
In this work, we propose a novel strategy based on interference alignment which takes into account inter-cell interference and intra-cell interference caused by spectrum sharing and full-duplex to establish a general achievability result on the sum degrees of freedom of the considered network.
Paired with an upper bound on the sum degrees of freedom, which is tight under certain conditions, we demonstrate how spectrum sharing and full-duplex can significantly improve the throughput over conventional cellular networks, especially for a network with large number of users and/or cells. 
\end{abstract}
\begin{IEEEkeywords}
Spectrum sharing, full-duplex, cellular networks, interference alignment, degrees of freedom
\end{IEEEkeywords}
 \IEEEpeerreviewmaketitle



\section{Introduction}
\PARstart{R}{adio} spectrum management of current state-of-the-art cellular networks are designed based on the {\em divide et impera} principle.
Due to practical capabilities of radio transceivers, regulatory requirements, standardization, and commercial reasons, radio spectrum has been divided in chunks of certain bandwidths. For example, the spectrum is divided between operators, between radio access technologies (e.g., GSM, LTE/A), between cells, and between uplink (UL) and downlink (DL) traffic (e.g., FDD, TDD). This principle may have provided a shortcut in the success of modern wireless communications, however, to support the ever increasing demands in wireless mobile data traffic, future cellular networks are required to use the limited spectrum more efficiently.

As the radio spectrum is a valuable resource, there have been several propositions for sharing the spectrum among multiple cells and operators, which in current design paradigms are competing entities that operate in different portions of frequency bands. By allowing {\em spectrum sharing}~\cite{SAPHYRE:02, Zhao:07, Akyildiz:08}, the spectrum is more flexibly utilized which can potentially increase the overall network spectral efficiency. Spectrum sharing has been considered in several aspects, e.g., in terms of radio access (orthogonal and non-orthogonal spectrum sharing~\cite{Lindblom:12}), in terms of operational range (single operator and intra-operator spectrum sharing~\cite{Ni:09, Gangula:13, Jorswieck:14}), and in terms of priority (primary--secondary sharing and equal priority (co-primary) spectrum sharing~\cite{Ahokangas:14, Hailu:14}).

In this paper, we study the fundamental limits of multicell non-orthogonal equal priority spectrum sharing networks. In particular, we take an {\em information theoretic approach} and model the network as an interference network with multiple base stations (BSs) and mobile users, each BS equipped with multiple antennas and each mobile user equipped with a single antenna. Indeed, there have been several approaches by modeling a spectrum sharing network as an interference channel, either in primary--secondary spectrum sharing scenarios (in the context of cognitive radio)~\cite{Devroye:06, Goldsmith:09} or in equal priority scenarios~\cite{Lindblom:12, Lindblom:14}. In our work, we further explore this direction by considering a generalization of the interference channel to represent a {\em multi-cell} network and study the degrees of freedom (DoF) in such networks. By analyzing the sum DoF, we approximately characterize the sum capacity of the considered network in the high signal-to-noise ratio regime.

Another important distinction in our work is that we consider multiantenna {\em full-duplex (FD) BSs}.
Results in~\cite{Choi10, Aryafar12,Khandani13, Duarte10, Jain11, Bharadia13} have demonstrated the feasibility of FD wireless communication by suppressing or cancelling self-interference in the RF and baseband domain. By enabling FD radio, simultaneous transmission and reception can potentially double the spectral efficiency of current half-duplex (HD) systems~\cite{Hong13}. In an alternative view, FD can be considered as an extreme case of non-orthogonal spectrum sharing among UL and DL traffic.
In this sense, our work studies two spectrum sharing concepts in a single comprehensive framework. The main challenge in this direction is to deal with {\em interference} caused by several coexisting entities and simultaneous non-orthogonal transmissions among them. This involves sufficient treatment of inter-cell interference, intra-cell interference, and additionally, interference caused by UL and DL transmissions.
The goal of this paper is to present a theoretical framework for understanding the DoF in FD multicell spectrum sharing networks and help understand the overall system improvement by unlocking the spectrum barriers.

\subsection{Related Work}
To deal with the additional interference caused by spectrum sharing among multiple cells and FD, we propose signal space \emph{interference alignment} (IA) strategies that are optimized for multicell FD networks. The pioneering work of Maddah-Ali, Motahari, and Khandani~\cite{Maddah-Ali:08} and Cadambe and Jafar~\cite{Jafar08,Cadambe107} introduced IA as an effective coding technique that efficiently deals with interference. In particular, IA has been shown to achieve the optimal DoF for various interference networks~\cite{Suh08,Suh:11,Viveck1:09,Viveck2:09,Tiangao:10,Annapureddy:11,Ke:12,Tiangao:12,Jeon4:12,JeonSuh:14,Sahai13,Nazer11:09}.

When applied to {\em cellular networks}, it was shown that IA can be successfully applied to mitigate interference in a two-cell network~\cite{Suh08,Suh:11}. Motivated by this approach, the work of~\cite{JeonSuh:14} proposed an IA strategy for a spectrum sharing multicell network with dynamic UL--DL configuration and showed that it attains the optimal DoF. Following this line of work, we have previously studied and characterized the optimal DoF of a {\em single cell} network that consists of a FD BS and, either, HD mobile users or FD mobile users~\cite{Jeon--Chae--Lim2015_ISIT}. The DoF improvement achievable by FD operation at BSs has been also studied in the context of blind IA techniques using reconfigurable antennas \cite{Yang:16} and linear beamforming techniques for multiple-input and multiple-output (MIMO) environment \cite{Kim:15}. In~\cite{Jeon--Chae--Lim2015_ISIT}, the UL data is sent to the BS using IA such that user-to-user interference is confined within a tolerated number of signal dimensions, while the BS transmits in the remaining signal dimensions via zero-forcing beamforming for the DL transmission. This work is a follow-up work on~\cite{Jeon--Chae--Lim2015_ISIT} by further extending the network to a spectrum shared network, i.e., a multicell configuration.

\subsection{Contributions and Organization}
Compared to single-cell cellular networks (where cells are assumed to operate in different frequency bands), multicell spectrum sharing inherents additional interference from cells that share the same spectrum. Thus, it is not clear whether spectrum sharing among multiple cells will provide some gain over a cellular network with cells that operate in separately dedicated bands. Assuming $K$ is the number of cells that share the same spectrum, what is the DoF of this network? How does the DoF behave with respect to $K$?

To answer these questions, we propose a novel strategy based on IA which takes into account inter-cell interference, intra-cell interference, as well as interference caused by FD. The key idea of our proposed scheme is to minimize the dimension of UL inter-cell and user-to-user interferences via IA beamforming at UL users, and at the same time, to align or null out DL intra-cell and inter-cell interferences, as well as BS-to-BS interference via IA beamforming or zero-forcing beamforming at BSs. In the proposed strategy, each BS employs IA beamforming when the number of antennas at each BS is small while zero-forcing beamforming is used otherwise. The DoF performance of this strategy is stated in Theorem~\ref{thm:dof_1}. Furthermore, we also derive an upper bound on the sum DoF (Theorem~\ref{thm:dof_2}), which is tight under certain conditions. These results demonstrate how spectrum sharing and FD can provide significant throughput improvement over conventional cellular networks, especially for a network with large number of users and/or cells. These gains are highlighted in Examples~\ref{ex:ex1} and \ref{ex:ex2} in Section~\ref{sec:main-results}.


Starting with the next section, we describe the network model and the sum DoF metric considered in this paper. In Section~\ref{sec:main-results}, we present the main results of the paper and intuitively explain how multicell non-orthogonal spectrum sharing and FD operation can increase the overall DoF. In Section~\ref{sec:achievability} we provide the achievability proof of Theorem~\ref{thm:dof_1}. Finally, in Section~\ref{sec:discussion} we briefly discuss about the impacts of BS-to-BS interference and self-interference on DoF.

We will use boldface lowercase letters to denote vectors and boldface uppercase letters to denote matrices.
Throughout the paper, $[1:n]$ denotes $\{1,2,\cdots,n\}$, $\mathbf{0}_n$ denotes the $n\times 1$ all-zero vector, $\mathbf{0}_{m\times n}$ denotes the $m\times n$ all-zero matrix, and $\mathbf{I}_n$ denotes the $n\times n$ identity matrix.
For a real value $a$, $\lceil a \rceil$ is the smallest integer greater than or equal to $a$.
For a set of vectors $\{\mathbf{a}_i\}$, $\operatorname{span}(\{\mathbf{a}_i\})$ denotes the vector space spanned by the vectors in $\{\mathbf{a}_i\}$.
For a vector $\mathbf{b}$, $\mathbf{b}\perp\operatorname{span}(\{\mathbf{a}_i\})$ means that $\mathbf{b}$ is orthogonal with all vectors in $\operatorname{span}(\{\mathbf{a}_i\})$.
For a matrix $\mathbf{A}$, $\mathbf{A}^{\dagger}$ and $\mathbf{A}^{-1}$ denote the transpose and inverse of $\mathbf{A}$, respectively.
For a set of matrices $\{\mathbf{A}_i\}$, $\operatorname{diag}(\mathbf{A}_1,\cdots, \mathbf{A}_n)$ denotes the block diagonal matrix consisting of $\{\mathbf{A}_i\}$.

\section{Problem Formulation} \label{sec:problem}
In this section, we formally introduce the network model and performance metric used in the paper.

\subsection{Network Model}
Consider a $K$-cell cellular network consisting of FD BSs and HD UL and DL users. 
Each BS $k\in[1:K]$ (the BS in the $k$th cell) wishes to send a set of independent messages $(W^{[\sf d]}_{k1},W^{[\sf d]}_{k2}, \cdots,W^{[\sf d]}_{kN})$ to its $N$ DL users and at the same time wishes to receive a set of independent messages $(W^{[\sf u]}_{k1},W^{[\sf u]}_{k2}, \cdots,W^{[\sf u]}_{kN})$ from its $N$ UL users in the same cell.
We assume that each BS is equipped with $M$ transmit antennas and $M$ receive antennas while each user is equipped with a single antenna. 
For simplicity, denote the $j$th UL user and the $j$th DL user in the $k$th cell by UL user $(k,j)$ and DL user $(k,j)$, respectively.

Let $\mathbf{g}_{ij,k}[t]\in \mathbb{R}^{1\times M}$ be the channel vector from BS $k$ to DL user $(i,j)$ at time $t$, $\mathbf{f}_{i,jk}[t]\in \mathbb{R}^{M\times 1}$ be the channel vector from UL user $(j,k)$ to BS $i$ at time $t$, $h_{ij,kl}[t]\in \mathbb{R}$ be the scalar channel from UL user $(k,l)$ to DL user $(i,j)$ at time $t$, and $\mathbf{B}_{ij}[t]\in \mathbb{R}^{M\times M}$ be the channel matrix from BS $j$ to BS $i$ at time $t$.
We assume that self-interference within each BS is perfectly suppressed so that $\mathbf{B}_{ii}[t]=\mathbf{0}_{M\times M}$ for all $i\in[1:K]$.
We further assume that all channel coefficients (except $\mathbf{B}_{ii}[t]$) are independent and identically distributed (i.i.d.) drawn from a continuous distribution and vary independently over time.
Global channel state information (CSI) is assumed to be available at each BS and each UL and DL user.
The assumption on global CSI is somewhat an idealistic assumption in practice. Nonetheless, we remark that the fundamental performance of spectrum sharing networks, even under this idealistic assumption, is unknown. The scope of this work is to provide an initial step in this direction for further studies based on more realistic CSI assumptions.

For $i\in[1:K]$ and $j\in[1:N]$, the received signal of DL user $(i,j)$ at time $t$, denoted by $y_{ij}^{[\sf d]}[t]\in \mathbb{R}$, is given by
\begin{align}\label{channel model1}
y^{[\sf d]}_{ij}[t]&=\sum_{k=1}^{K}\mathbf{{g}}_{ij,k}[t]{\mathbf{x}}^{[\sf bs]}_{k}[t]+\sum_{k=1}^{K}\sum_{l=1}^{N}h_{ij,kl}[t]x^{[\sf u]}_{kl}[t]+z_{ij}^{[\sf d]}[t]
\end{align}
and the received signal vector of BS $i$ at time $t$, denoted by $\mathbf{y}_{i}[t]^{[\sf bs]}\in \mathbb{R}^{M\times 1}$, is given by 
\begin{align}\label{channel model2}
{\mathbf{y}}^{[\sf bs]}_{i}[t]&=\sum_{j=1}^{K}\sum_{k=1}^{N}\mathbf{{f}}_{i,jk}[t]x^{[\sf u]}_{jk}[t]+\sum_{j=1}^K\mathbf{{B}}_{ij}[t]{\mathbf{x}}^{[\sf bs]}_{j}[t]+\mathbf{{z}}^{[\sf bs]}_i[t],
\end{align}
where $\mathbf{x}_k^{[{\sf bs}]}[t]\in\mathbb{R}^{M\times1}$ is the transmit signal vector of BS $k$ at time $t$, $x^{[{\sf u}]}_{kl}[t]\in\mathbb{R}$ is the transmit signal of UL user $(k,l)$ at time $t$, $\mathbf{z}_i^{[{\sf bs}]}[t]\in\mathbb{R}^{M\times 1}$ is the additive noise vector of BS $k$ at time $t$, and $z_{ij}^{[{\sf d}]}[t]\in\mathbb{R}$ is the additive noise of DL user $(i,j)$ at time $t$.
Each BS and each UL user should satisfy an average transmit power constraint, i.e., $\frac{1}{n}\sum_{t=1}^n\|\mathbf{x}_i^{[{\sf bs}]}[t]\|^2\leq P$ and $\frac{1}{n}\sum_{t=1}^n(x_{ij}^{[{\sf u}]}[t])^2\leq P$ for all $i\in[1:K]$ and $j\in[1:N]$.
We assume that all elements in $\mathbf{z}_i^{[{\sf bs}]}[t]$ and $z_{ij}^{[{\sf d}]}[t]$ are i.i.d. drawn from $\mathcal{N}(0,1)$.

Throughput the paper, we will also use the following definitions:
\begin{align}
\mathbf{{B}}_{ij}[t]&=\left[\begin{array}{cccc}\mathbf{b}_{i,j1}[t],&\mathbf{b}_{i,j2}[t],&\cdots,&\mathbf{b}_{i,jM}[t]\end{array}\right]\nonumber\\
&=\left[\begin{array}{cccc}b_{i1,j1}[t],&b_{i1,j2}[t],&\cdots,&b_{i1,jM}[t]\nonumber\\
b_{i2,j1}[t],&b_{i2,j2}[t],&\cdots,&b_{i2,jM}[t]\nonumber\\
\vdots&\vdots&\ddots&\vdots\\
b_{iM,j1}[t],&b_{iM,j2}[t],&\cdots,&b_{iM,jM}[t]\end{array}\right],\nonumber\\
\mathbf{g}_{ij,k}[t]&=\left[\begin{array}{cccc}g_{ij,k1}[t],&g_{ij,k2}[t],&\cdots,&g_{ij,kM}[t]\end{array}\right],\nonumber\\
\mathbf{f}_{i,jk}[t]&=\left[\begin{array}{cccc}f_{i1,jk}[t],&f_{i2,jk}[t],&\cdots,&f_{iM,jk}[t]\end{array}\right]^{\dagger}, 
\end{align}
where $\mathbf{b}_{i,jk}[t]\in \mathbb{R}^{M\times 1}$ is the channel vector from the $k$th transmit antenna of BS $j$ to BS $i$ at time $t$, $b_{ik,jl}[t]\in \mathbb{R}$ is the scalar channel from the $l$th transmit antenna of BS $j$ to the $k$th receive antenna of BS $i$ at time $t$, $g_{ij,kl}[t]\in \mathbb{R}$ is the scalar channel from the $l$th transmit antenna of BS $k$ to DL user $(i,j)$ at time $t$, and $f_{il,jk}[t]\in \mathbb{R}$ is the scalar channel from UL user $(j,k)$ to the $l$th receive antenna of BS $i$ at time $t$.

\begin{figure}[t!]
\begin{center}
\includegraphics[width=0.45\textwidth]{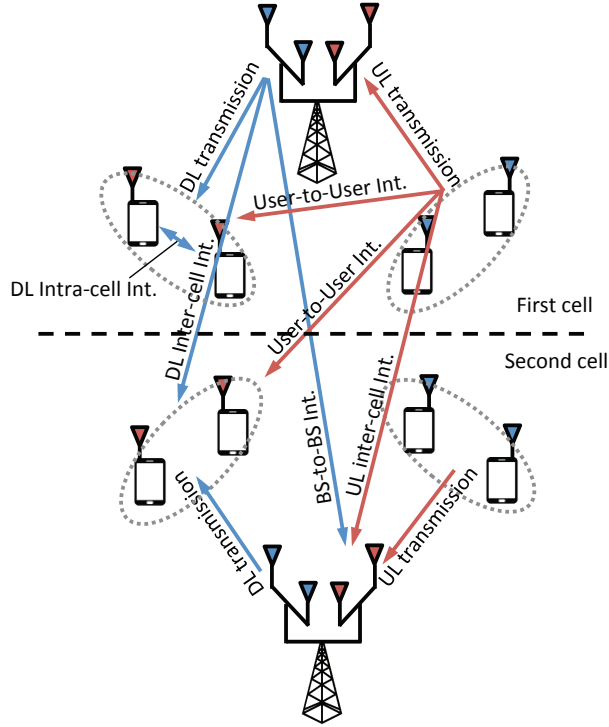}
\end{center}
\caption{Types of Interferences for the $(2,2,2)$ FD cellular network.}
\label{Fig:Channel_model}
\end{figure}

For here and henceforth, we refer to the above network by the $(K,M,N)$ FD cellular network. As an example, Fig.~\ref{Fig:Channel_model} illustrates types of interferences based on the $(2,2,2)$ FD cellular network. 
For simplicity, we omit the interferences caused by the second cell in the figure.
As seen in the figure, each DL user suffers from DL intra-cell, DL inter-cell, and user-to-user interferences, and each BS suffers from UL intra-cell, UL inter-cell, and BS-to-BS interferences. It is worthwhile to mention that, compared to HD networks, user-to-user and BS-to-BS interferences are exclusive interference components caused by enabling FD operation at the BSs. In addition, compared to the single-cell FD case, UL and DL inter-cell interferences and BS-to-BS interferences are exclusive interference components caused by enabling multicell spectrum sharing.

\subsection{Degrees of Freedom}

Let $W^{[{\sf d}]}_{ki}$ and $W^{[{\sf u}]}_{kj}$ be chosen uniformly at random from $[1:2^{nR^{[{\sf d}]}_{ki}}]$ and $[1:2^{nR^{[{\sf u}]}_{kj}}]$ respectively, where $i,j\in[1:N]$ and $k\in[1:K]$.
Then a length-$n$  $(2^{nR^{[{\sf d}]}_{11}},\cdots,2^{nR^{[{\sf d}]}_{KN}},2^{nR^{[{\sf u}]}_{11}},\cdots,2^{nR^{[{\sf u}]}_{KN}};n)$ code consists of the following set of encoding and decoding functions:
\begin{itemize}
\item {\em Encoding:} For $t\in[1:n]$, the encoding function of BS $k$ at time $t$ is given by 
\begin{align*}
\mathbf{x}_k^{[\sf bs]}[t]=\phi_t(W_{k1}^{[{\sf d}]},\cdots,W_{kN}^{[{\sf d}]},\mathbf{y}^{[{\sf bs}]}_k[1],\cdots \mathbf{y}_k^{[{\sf bs}]}[t-1]).
\end{align*}
For $t\in[1:n]$, the encoding function of UL user $(k,j)$ at time $t$ is given by 
\begin{align*}
x_{kj}[t]=\varphi_t(W_{kj}^{[{\sf u}]}),
\end{align*}
where $k\in[1:K]$ and $j\in[1:N]$.
\item {\em Decoding:} Upon receiving $\mathbf{y}_k^{[{\sf bs}]}[1]$ to $\mathbf{y}_k^{[{\sf bs}]}[n]$, the decoding function of BS $k$ is given by 
\begin{align*}
	\hat{W}^{[{\sf u}]}_{kj}=\chi_{kj}(\mathbf{y}^{[{\sf bs}]}_k[1],\cdots,\mathbf{y}_k^{[{\sf bs}]}[n],W_{k1}^{[{\sf d}]},\cdots,W_{kN}^{[{\sf d}]}) \text{ for } j\in[1:N]. 
\end{align*}
Upon receiving $y_{ki}^{[{\sf d}]}[1]$ to $y_{ki}^{[{\sf d}]}[n]$, the decoding function of DL user $(k,i)$ is given by 
\begin{align*}
\hat{W}^{[{\sf d}]}_{ki}=\psi_{ki}(y_{ki}^{[{\sf d}]}[1],\cdots,y_{ki}^{[{\sf d}]}[n]),
\end{align*} 
where $k\in[1:K]$ and $i\in[1:N]$.
\end{itemize}

A rate tuple $(R^{[{\sf d}]}_{11},\cdots,R^{[{\sf d}]}_{KN},R^{[{\sf u}]}_{11},\cdots,R^{[{\sf u}]}_{KN})$ is said to be \emph{achievable} for the $(K,M,N)$ FD cellular network if there exists a sequence of  $(2^{nR^{[{\sf d}]}_{11}},\cdots,2^{nR^{[{\sf d}]}_{KN}},2^{nR^{[{\sf u}]}_{11}},\cdots,2^{nR^{[{\sf u}]}_{KN}};n)$ codes such that $\Pr(\hat{W}^{[{\sf d}]}_{ki}\neq W^{[{\sf d}]}_{ki})\to 0$ and $\Pr(\hat{W}^{[{\sf u}]}_{kj}\neq W^{[{\sf u}]}_{kj})\to 0$ as $n$ increases for all $i,j\in[1:N]$ and $k\in[1:K]$.
Then a DoF tuple is said to be \emph{achievable} if 
\begin{align}
(d^{[{\sf d}]}_{11},\cdots,d_{KN}^{[{\sf d}]},d^{[{\sf u}]}_{11},\cdots,d_{KN}^{[{\sf u}]})=\lim_{P\to\infty}\left(\frac{R^{[{\sf d}]}_{11}}{\frac{1}{2}\log P},\cdots,\frac{R^{[{\sf d}]}_{KN}}{\frac{1}{2}\log P},\frac{R^{[{\sf u}]}_{11}}{\frac{1}{2}\log P},\cdots,\frac{R^{[{\sf u}]}_{KN}}{\frac{1}{2}\log P}\right)
\end{align}
for some achievable rate tuple $(R^{[{\sf d}]}_{11},\cdots,R^{[{\sf d}]}_{KN},R^{[{\sf u}]}_{11},\cdots,R^{[{\sf u}]}_{KN})$.\footnote{For complex channels, $(d^{[{\sf d}]}_{11},\cdots,d_{KN}^{[{\sf d}]},d^{[{\sf u}]}_{11},\cdots,d_{KN}^{[{\sf u}]})$ is defined as $\lim_{P\to\infty}\left(\frac{R^{[{\sf d}]}_{11}}{\log P},\cdots,\frac{R^{[{\sf d}]}_{KN}}{\log P},\frac{R^{[{\sf u}]}_{11}}{\log P},\cdots,\frac{R^{[{\sf u}]}_{KN}}{\log P}\right)$. For this case, we can achieve the same DoF tuple by applying a coding scheme proposed for real channels to both the real and imaginary parts.}

We further denote the maximum achievable sum DoF of the $(K,M,N)$ FD cellular network by $d_{\Sigma}$, i.e.,
\begin{align}
d_{\Sigma}=\max_{(d^{[{\sf d}]}_{11},\cdots,d_{KN}^{[{\sf d}]},d^{[{\sf u}]}_{11},\cdots,d_{KN}^{[{\sf u}]})\in\mathcal{D}}\left\{\sum_{k=1}^K\sum_{i=1}^{N}\left(d_{ki}^{[{\sf d}]}+d_{ki}^{[{\sf u}]}\right)\right\},
\end{align}
where $\mathcal{D}$ denotes the DoF region of the $(K,M,N)$ FD cellular network. For the rest of the paper, we will focus on the sum DoF of the $(K,M,N)$ FD cellular network.

\section{Main Results} \label{sec:main-results}

In this section, we first describe our main results, the sum DoF of the $(K,M,N)$ FD cellular network, and then compare with the conventional HD cellular network.
In the following, we establish an achievable lower bound on $d_{\Sigma}$.

\begin{theorem}\label{thm:dof_1}
For the $(K,M,N)$ FD cellular network, the following sum DoF is achievable:
\begin{align}\label{dof_achievable_theorem}
&d_{\Sigma}\geq \left\{\begin{array}{lll}
KN\left(\frac{M^2+MN}{M^2+N^2+MN}\right)~~\quad\quad\quad\quad\quad\quad {~}~\text{if $M\leq (K-2)N$, }\\
\max\left\{KN\left(\frac{M^2+MN}{M^2+N^2+MN}\right),\frac{2MN+M^2}{M+N},\min\left\{\frac{MK}{K-1}, (K-1)N\right\}\right\}  \\ 
\quad\quad\quad\quad\quad\quad\quad\quad\quad\quad\quad\quad\quad\quad\quad\quad\text{if $(K-2)N <M< (K-1)N$,}\\
M+\frac{M}{K}\quad \quad\quad\quad\quad\quad\quad\quad\quad\quad\quad\quad~~\text{if $(K-1)N\leq M<\frac{K^2N}{K+1}$,}\\
KN~~~\quad\quad\quad\quad\quad\quad\quad\quad\quad\quad\quad\quad\quad \text{otherwise.}
\end{array} \right.
\end{align}
\end{theorem}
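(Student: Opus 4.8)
The plan is to prove \eqref{dof_achievable_theorem} constructively, exhibiting for each antenna regime a linear beamforming scheme over a symbol-extended channel that combines interference alignment (IA) at the uplink users with a mixture of IA and zero-forcing (ZF) beamforming at the base stations (BSs). The four cases reflect genuinely different bottlenecks: when $M$ is small compared with the aggregate uplink interference dimension (roughly $(K-2)N$ or $(K-1)N$), the binding constraint is aligning the user-to-user and uplink inter-cell interference, which forces an asymptotic signal-space IA construction over a long extension; when $M$ is large, each BS has enough spatial dimensions to null interference essentially in one shot. I would therefore present a separate scheme per regime and, where the bound is a maximum (the second case), simply take the best of the candidate schemes.

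Concretely, I would first fix an extension length $T$ and stack the (diagonal, time-varying) channels over $T$ slots. Each uplink user $(k,l)$ transmits along a precoder chosen so that, at every DL user, the $KN$ scalar user-to-user signals $\{h_{ij,kl}[t]\}$ collapse onto a common low-dimensional subspace and, at every unintended BS, the uplink inter-cell interference is similarly confined; because the user-to-user links are scalar, this is exactly a Cadambe--Jafar-type alignment problem. Each BS then spends its $M$ transmit antennas partly to ZF the DL intra-cell and inter-cell interference and partly to null the BS-to-BS interference it creates at the other (receiving) BSs, while its $M$ receive antennas null whatever uplink interference survived the user precoding. Decoding is by a ZF receiver at each BS and each DL user on the extended signal, and genericity of the i.i.d.\ channel coefficients guarantees that each desired subspace is almost surely independent of the aligned interference.

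The sum DoF then follows by optimizing the per-user stream counts $d^{[{\sf u}]}_{kl}, d^{[{\sf d}]}_{ki}$ and the extension length $T$ subject to the alignment/ZF feasibility constraints. In the regime $M\le(K-2)N$, the asymptotic IA ratio yields the factor $\frac{M(M+N)}{M^2+MN+N^2}$ per cell, i.e.\ the overall value $KN\bigl(\frac{M^2+MN}{M^2+N^2+MN}\bigr)$; in the regimes $M\ge(K-1)N$ the BSs can ZF outright, producing the one-shot counts $M+\tfrac{M}{K}$ and then the saturated value $KN$; and the intermediate window is covered by interpolating between these together with a couple of auxiliary one-shot schemes, which is precisely why its bound is stated as a maximum.

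The hard part, I expect, is the uplink--downlink coupling that full-duplex introduces and that is absent in half-duplex networks: the user-to-user interference ties the uplink precoders to DL decodability, while the BS-to-BS interference ties the DL precoders (and the BS transmit/receive dimension split) to UL decodability. A single family of precoders must therefore simultaneously align or null all six interference types while keeping every desired signal generic, so the feasibility counts for uplink and downlink cannot be decoupled as in a half-duplex network. Establishing this joint feasibility---and, in particular, pinning down the exact split of the $M$ spatial dimensions between IA and ZF that optimizes the count in the window $(K-2)N<M<(K-1)N$---is where I expect the bulk of the technical effort to lie.
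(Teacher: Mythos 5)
Your overall architecture (asymptotic UL IA by the users over a time-extended channel, spatial processing at the BSs, then a linear program over stream counts) matches the paper's, and your concrete construction is essentially the paper's Scheme~2 (UL IA plus DL interference \emph{nulling}). The genuine gap is that this scheme cannot produce the first case of the theorem, and your opening promise of ``a mixture of IA and ZF at the BSs'' is never instantiated. With DL zero-forcing, each BS's beams must be orthogonal to the desired subspaces of the $(K-1)N$ other-cell DL users, the $N-1$ same-cell DL users, and the $K-1$ other BSs' UL receive spaces; this yields the feasibility constraint $\lambda_1(K-1)N+\lambda_2 KN\le M$, whose optimum for $M\le (K-2)N$ is $\frac{MK}{K-1}$ --- a quantity that stays near $M$ for all $K$. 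The claimed first-case value $KN\bigl(\frac{M^2+MN}{M^2+N^2+MN}\bigr)$ grows linearly in $K$, so no optimization of stream counts inside your ZF-based DL scheme reaches it; attributing it to ``the asymptotic IA ratio'' of the \emph{uplink} alignment is where the argument breaks. What is missing is the paper's Scheme~1: the BSs themselves run asymptotic signal-space IA on the downlink, treating each of the BS transmit antennas as a virtual single-antenna transmitter of an $X$-network (Cadambe--Jafar style), splitting every DL message across the $M$ transmit antennas, and building the DL beams as monomials in the cross-channel coefficients with a \emph{second} alignment parameter $T_{\sf d}$ scaled against the UL parameter $T$ so that the DL interference dimension is $\frac{\lambda_2}{\lambda_1}$ times the UL one. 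That scheme gives the linear program with constraints $\lambda_1(N+\min\{M,K(N-1)\})+\lambda_2N\leq M$ and $\lambda_1+\lambda_2(1+\frac{N}{M})\leq 1$, whose optimum is exactly $KN\bigl(\frac{M^2+MN}{M^2+N^2+MN}\bigr)$. The small-$M$ bottleneck is thus the DL and BS-to-BS interference that the BSs cannot null and must instead align, not (as you state) the UL user-to-user alignment; the theorem is the maximum of the two schemes' optima.

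A second, smaller gap: the term $\min\bigl\{\frac{MK}{K-1},(K-1)N\bigr\}$ in the window $(K-2)N<M<(K-1)N$ is not obtained by ``interpolating'' or by auxiliary one-shot schemes. It comes from \emph{deactivating one cell}: apply the third and fourth cases of the theorem to the remaining $K-1$ cells, for which the condition $M\ge (K-1)N$ relaxes to $M\ge (K-2)N$, giving $\min\{M+\frac{M}{K-1},(K-1)N\}=\min\bigl\{\frac{MK}{K-1},(K-1)N\bigr\}$. Without this cell-shutdown argument and without the DL-IA scheme above, your proposal establishes only the ZF-scheme values, i.e., a strictly weaker bound than the theorem in its first two cases.
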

\begin{proof}
The proof of Theorem \ref{thm:dof_1} is deferred to Section \ref{sec:achievability}.
\end{proof}


\begin{remark}[FD-user case]
Following a similar argument in\cite[Theorem 2]{Jeon--Chae--Lim2015_ISIT}, it is obvious that the achievable lower bound on $d_{\Sigma}$ in Theorem \ref{thm:dof_1} is also valid for the FD-user case consisting of $N$ FD users in each cell (instead of $N$ HD UL users and $N$ HD DL users), simultaneously sending and receiving UL and DL messages with the FD BS in the same cell. \hfill$\lozenge$
\end{remark}

\begin{remark}[Single-cell case] \label{re:single_cell}
For the single-cell case, i.e., $K=1$, it has been shown independently in~\cite{Jeon--Chae--Lim2015_ISIT, Bai_ISIT} that  $d_{\Sigma}=\min\{2M,N\}$, which coincides with the achievable sum DoF in Theorem \ref{thm:dof_1}. \hfill$\lozenge$
\end{remark}

In order to see the DoF gain from the FD operation at BSs, we introduce the sum DoF of the $(K,M,N)$ HD cellular network in which each HD BS equipped with $M$ transmit antennas supports $N$ HD DL users in the same cell. For this case, the optimal sum DoF, denoted by $d_{\Sigma, HD}$, has been characterized in \cite[Theorem 3.4]{Sridharan:13}.

\begin{theorem}[Sridharan--Yu \cite{Sridharan:13}] \label{thm:HD_case}
For the $(K,M,N)$ HD cellular network,
\begin{align}\label{eq:HD dof}
&d_{\Sigma,HD}=\left\{\begin{array}{lll} \frac{KMN}{M+N}{~~} \text{if $M<(K-1)N$,}\\
M \quad\quad \text{if $(K-1)N\leq M<KN$,}\\
KN~~~ ~\text{otherwise.}
\end{array} \right.
\end{align}
\end{theorem}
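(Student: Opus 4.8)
The statement is the known sum-DoF characterization of the $K$-cell MIMO interfering broadcast channel from \cite{Sridharan:13}, so the plan is to establish matching achievability and converse bounds and then verify that they coincide exactly at the two regime boundaries $M=(K-1)N$ and $M=KN$. It helps to first record the two elementary facts that drive everything: each DL user has a single receive antenna, so it carries at most one DoF and hence $d_{\Sigma,HD}\le KN$ unconditionally; and a single BS with $M$ transmit antennas can emit at most $M$ spatial streams. The three-branch formula is exactly $\min\!\big(KN,\max(M,\tfrac{KMN}{M+N})\big)$: the alignment value $\tfrac{KMN}{M+N}$ lies above $M$ but below $KN$ when $M<(K-1)N$, it drops below $M$ once $M>(K-1)N$, and the receiver bound $KN$ takes over for $M\ge KN$. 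The task is to certify each of these three pieces.

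For achievability I would split into the same three regimes. When $M\ge KN$ each BS has enough antennas to zero-force its transmission at the $(K-1)N$ out-of-cell users while still delivering $N$ independent streams to its own users, giving $KN$ and meeting the receiver bound. When $M<(K-1)N$ I would use linear interference alignment with beamforming over a symbol extension of length $T$: the $(K-1)N$ interfering beams seen in each cell are squeezed (asymptotically, in the Cadambe--Jafar sense \cite{Cadambe107}) into the smallest possible subspace so that each BS retains enough interference-free dimensions, and a dimension-counting (properness) check shows feasibility precisely up to the threshold $M=(K-1)N$, yielding $\tfrac{MN}{M+N}$ per cell and $\tfrac{KMN}{M+N}$ in total. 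The intermediate regime $(K-1)N\le M<KN$ is the delicate one: full out-of-cell zero-forcing collapses to too few (indeed zero, at $M=(K-1)N$) streams, while pure alignment under-delivers, so I would use a hybrid that tolerates residual aligned interference and delivers exactly $M$ streams. It is cleanest to argue all of this in the dual uplink interfering multiple-access channel, where ``align the interferers at each receiver'' is the natural operation, and then invoke uplink--downlink DoF duality.

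For the converse, the trivial bound $d_{\Sigma,HD}\le KN$ already settles $M\ge KN$. In the interference-limited regime the target is the reciprocal-sum inequality $\tfrac{1}{d_{\Sigma,HD}}\ge \tfrac{1}{KM}+\tfrac{1}{KN}$, which rearranges to $d_{\Sigma,HD}\le\tfrac{KMN}{M+N}$ and exposes the two competing limitations, the total transmit dimension $KM$ and the total receive dimension $KN$. I would obtain this by the genie-aided technique of \cite{Sridharan:13}: hand a carefully chosen subset of receivers enough side information to reconstruct the aggregate interference subspace, so that the surviving entropy terms pit the transmit space of the BSs against the receive space of the users, and then take the high-SNR DoF limit. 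The same construction, specialized to $M>(K-1)N$, can no longer produce $\tfrac{KMN}{M+N}$ (it cannot, since $M>\tfrac{KMN}{M+N}$ there), but instead yields the looser transmit-limited bound that evaluates to $M$; combined with $KN$ this gives the middle branch.

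I expect two steps to be the real obstacles. The first is the converse itself: choosing the genie so that the bound is simultaneously tight in the low regime (delivering the sharp reciprocal sum rather than the loose $\min(KM,KN)=KN$) and correctly degrades to $M$ in the middle regime, which requires tracking which cells cooperate and exploiting the i.i.d.\ genericity of the channels. The second is the middle-regime achievability, where neither zero-forcing nor alignment alone matches the bound and the hybrid must be shown both to be feasible (enough antennas to suppress the combined interference of the active streams) and to reach exactly $M$. Confirming the boundary evaluations $\tfrac{KMN}{M+N}\big|_{M=(K-1)N}=(K-1)N=M$ and $M\big|_{M=KN}=KN$ then certifies that the three pieces glue together into \eqref{eq:HD dof}.
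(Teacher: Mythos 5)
First, a point of reference: the paper does not prove this statement at all. Theorem~\ref{thm:HD_case} is imported verbatim from \cite[Theorem 3.4]{Sridharan:13} (hence the attribution in the theorem header), and it is used in this paper only as a baseline for comparison and as an ingredient for the alternative upper bound \eqref{eq:converse 2}. So your proposal is being measured against the external Sridharan--Yu proof, not against anything written in this paper.

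As a reconstruction of that external proof, your outline has the right skeleton: the identity $d_{\Sigma,HD}=\min\bigl\{KN,\max\bigl\{M,\tfrac{KMN}{M+N}\bigr\}\bigr\}$ is correct, zero-forcing settles $M\geq KN$, and decomposition of each BS into single-antenna transmitters plus Cadambe--Jafar-style asymptotic alignment is indeed how $\tfrac{KMN}{M+N}$ is achieved in the low regime. But there are two genuine gaps. The first is the converse: ``hand a carefully chosen subset of receivers enough side information to reconstruct the aggregate interference subspace'' is a pointer to a technique, not an argument. The sharp bound $d_{\Sigma,HD}\leq \tfrac{KMN}{M+N}$ for $M<(K-1)N$ and the bound $d_{\Sigma,HD}\leq M$ in the middle regime are precisely the hard content of \cite{Sridharan:13}, and your proposal explicitly defers both as ``obstacles,'' so nothing beyond the trivial $KN$ bound is actually established. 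The second gap concerns middle-regime achievability, where your diagnosis is wrong rather than merely incomplete: zero-forcing does not ``collapse to zero streams at $M=(K-1)N$.'' Activate exactly $M$ users network-wide with at most $N$ per cell (possible since $M<KN$), one stream each; the beamforming vector of each active stream need only be orthogonal to the channels from its own BS to the other $M-1$ active users, which is $M-1$ generic constraints in $\mathbb{R}^{M}$, leaving a one-dimensional solution space, and the desired gain is generically nonzero because the desired user's channel is independent of those constraints. This plain partial ZF with user selection delivers exactly $M$ DoF with no symbol extension, no alignment, no residual interference, and no uplink--downlink duality detour; the ``hybrid tolerating residual aligned interference'' you propose is unnecessary and, as described, too vague to verify. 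In short, the proposal is a plausible plan that matches the known strategy in outline, but it is not a proof: the converse is missing, and the one step you flag as delicate on the achievability side has an elementary solution that the proposal overlooks.
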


Although we assume DL transmission for the $(K,M,N)$ HD cellular network, Theorem \ref{thm:HD_case} still holds if we define the $(K,M,N)$ HD cellular network based on UL transmission.

We also establish an upper bound on $d_{\Sigma}$ for $K\geq 2$ in the following theorem, which can characterize $d_{\Sigma}$ for a class of network topologies.
\begin{theorem}\label{thm:dof_2}
Consider the $(K,M,N)$ FD cellular network. For $K\geq 2$, any achievable sum DoF should satisfy the following upper bound:
\begin{align}\label{dof_converse theorem}
d_{\Sigma}\leq  K\min\{M,N\}.
\end{align}
\end{theorem}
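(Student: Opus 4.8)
The plan is to establish the bound as the minimum of two separate sum-DoF inequalities, $d_\Sigma \le KN$ and $d_\Sigma \le KM$, whose combination yields $d_\Sigma \le K\min\{M,N\}$. Both are proved by the same device: hand a genie-aided central receiver one low-dimensional block of observations together with all noise realizations (which are DoF-free, since Gaussian noise contributes only $O(n)$ to entropy), show that this receiver can reconstruct every transmitted symbol and hence decode all $2KN$ messages, and then bound the total rate by the differential entropy of the observation block, i.e. by its dimension times $\frac12\log P$. Throughout I will use the stacked channel matrices $\mathbf{F}[t]\in\mathbb{R}^{KM\times KN}$ (UL users to BSs), $\mathbf{G}[t]\in\mathbb{R}^{KN\times KM}$ (BSs to DL users), $\mathbf{H}[t]\in\mathbb{R}^{KN\times KN}$ (user-to-user), and $\mathbf{B}[t]\in\mathbb{R}^{KM\times KM}$ (BS-to-BS, the block matrix with $\mathbf{B}_{ii}[t]=\mathbf{0}$), all of which are known and generic.

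For $d_\Sigma \le KN$ I would feed the central receiver the $KN$-dimensional block $\mathbf{Y}^{[\sf d]}$ of all DL received signals, plus the noises $\mathbf{Z}^{[\sf bs]},\mathbf{Z}^{[\sf d]}$. First decode all $W^{[\sf d]}_{ki}$: this is legitimate because the DL decoders $\psi_{ki}$ take no genie side information, so running them on $\mathbf{Y}^{[\sf d]}$ recovers $W^{[\sf d]}$ with vanishing error. Next reconstruct the remaining signals by a causal time-recursion: having rebuilt $\mathbf{y}^{[\sf bs]}[1{:}t-1]$, evaluate the BS encoders $\phi_t$ (which depend only on $W^{[\sf d]}$ and this feedback) to get $\mathbf{x}^{[\sf bs]}[t]$; subtract $\mathbf{G}[t]\mathbf{x}^{[\sf bs]}[t]$ and $\mathbf{z}^{[\sf d]}[t]$ from $\mathbf{y}^{[\sf d]}[t]$ and invert the square generic matrix $\mathbf{H}[t]$ to obtain $\mathbf{x}^{[\sf u]}[t]$; then form $\mathbf{y}^{[\sf bs]}[t]=\mathbf{F}[t]\mathbf{x}^{[\sf u]}[t]+\mathbf{B}[t]\mathbf{x}^{[\sf bs]}[t]+\mathbf{z}^{[\sf bs]}[t]$ to continue. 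Since the UL encoders are memoryless functions of $W^{[\sf u]}$, recovering $\mathbf{X}^{[\sf u]}$ yields all $W^{[\sf u]}$. A Fano plus chain-rule argument then gives $H(W^{[\sf d]},W^{[\sf u]})\le h(\mathbf{Y}^{[\sf d]})+o(n\log P)\le \tfrac{n}{2}KN\log P+o(n\log P)$, i.e. $d_\Sigma\le KN$. The two facts that make this close are that the DL decoders use no side information and that $\mathbf{H}[t]$ is square ($N$ UL and $N$ DL users per cell) and hence generically invertible.

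For $d_\Sigma \le KM$ the intended counterpart feeds the central receiver the $KM$-dimensional block $\mathbf{Y}^{[\sf bs]}$, and the role previously played by invertibility of $\mathbf{H}[t]$ is taken over by invertibility of the $KM\times KM$ BS-to-BS matrix $\mathbf{B}[t]$; since $\mathbf{B}[t]$ has zero diagonal blocks, it is generically invertible precisely when $K\ge2$, which is exactly the hypothesis of the theorem (and consistent with the single-cell value $\min\{2M,N\}$ of Remark~\ref{re:single_cell}, where $\mathbf{B}[t]\equiv\mathbf{0}$ and no such bound is available). Here I expect the main obstacle. Unlike the DL decoders, the BS decoders $\chi_{kj}$ are handed their own cell's $W^{[\sf d]}_k$, and, more seriously, the full-duplex feedback couples uplink reception and downlink transmission across cells, so that $\mathbf{Y}^{[\sf bs]}$ is an underdetermined view of the transmit signals ($KM$ observations against $KM+KN$ unknowns per symbol) and the naive reconstruction recursion does not even start. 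The difficulty is genuine rather than cosmetic: a cut-set or BS-cooperation bound only removes the BS-to-BS interference and yields the strictly weaker $d_\Sigma\le\min\{2KM,KN\}$, so the argument must retain BS-to-BS interference as a binding constraint. The crux is therefore to design a genie that lets the central receiver resolve $\mathbf{X}^{[\sf bs]}$ through $\mathbf{B}[t]$ without paying for the downlink messages twice, so that both message sets are carried by the single $KM$-dimensional block $\mathbf{Y}^{[\sf bs]}$; once that reconstruction closes, the same entropy bookkeeping delivers $d_\Sigma\le KM$. Combining the two inequalities gives $d_\Sigma\le K\min\{M,N\}$.
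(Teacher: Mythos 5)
Your decomposition $K\min\{M,N\}=\min\{KN,KM\}$ is legitimate, and the $d_\Sigma\le KN$ half is a reasonable sketch: decoding all $W^{[\sf d]}$ from $\mathbf{Y}^{[\sf d]}$, running the causal recursion through the square generic matrix $\mathbf{H}[t]$, and then feeding the reconstructed $\mathbf{y}^{[\sf bs]}$ together with the already-decoded $W^{[\sf d]}$ into the BS decoders $\chi_{kj}$ does close that loop (modulo the standard technicality that handing noise realizations to a genie receiver makes the conditional differential entropies degenerate, so the final ``entropy bookkeeping'' needs more care than dimension counting). The genuine gap is the one you flag yourself: the half $d_\Sigma\le KM$ is never proved. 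Your diagnosis of why it is hard is accurate --- $\mathbf{Y}^{[\sf bs]}$ gives $KM$ observations against $KM+KN$ unknowns per symbol, the BS decoders require their own $W^{[\sf d]}_k$ as inputs, and handing $W^{[\sf d]}$ to the genie collapses the bound to a statement about UL rates only --- but ``design a genie that resolves $\mathbf{X}^{[\sf bs]}$ through $\mathbf{B}[t]$ without paying for the downlink messages twice'' is precisely the missing idea, not a step you execute. Note also that $KM$ is the binding half exactly when $M<N$, where $\mathbf{F}[t]$ is a fat $KM\times KN$ matrix, so UL signals cannot even be resolved pointwise; this is the regime in which the theorem is actually invoked (e.g., to conclude $d_\Sigma\to KM$ for large $N$), so the unproved half is the important one.

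The paper closes this by a completely different and much shorter route: a reduction to a known two-cell result rather than a global genie construction. For each ordered pair of distinct cells $(i,k)$, eliminate all messages except the DL messages of cell $i$ and the UL messages of cell $k$; the resulting network is an instance of the uplink--downlink multiantenna cellular network of~\cite{JeonSuh:14}, whose Theorem~1 gives $\sum_{j=1}^{N}d_{ij}^{[\sf d]}+\sum_{j=1}^{N}d_{kj}^{[\sf u]}\le\min\{M,N\}$. The cross-cell pairing is exactly what keeps BS-to-BS interference (BS $i$ transmitting into BS $k$'s receiver) and user-to-user interference binding, which is why the per-pair bound is $\min\{M,N\}$ rather than something additive; this plays the role your unresolved genie was supposed to play. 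Summing over all $K(K-1)$ ordered pairs, each $\sum_j d_{ij}^{[\sf d]}$ and each $\sum_j d_{kj}^{[\sf u]}$ appears $K-1$ times, so $(K-1)d_\Sigma\le K(K-1)\min\{M,N\}$, which is the theorem. Observe too that the hypothesis $K\ge 2$ enters the paper's proof merely because such cross-cell pairs must exist, not through generic invertibility of $\mathbf{B}[t]$ as your proposal conjectures (consistently, for $K=1$ the bound fails and $d_\Sigma=\min\{2M,N\}$ as in Remark~\ref{re:single_cell}).
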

\begin{proof}
For $i,k\in[1:K]$ satisfying $i\neq k$, we eliminate all the messages except 
\begin{align}
(W_{i1}^{\sf[d]},\ldots, W_{iN}^{\sf[d]}, W_{k1}^{\sf[u]},\ldots, W_{kN}^{\sf[u]}),
\end{align}
which does not decrease $\sum_{j=1}^{N}d_{ij}^{[\sf d]}+\sum_{j=1}^{N}d_{kj}^{[\sf u]}$. Then, from the result in~\cite[Theorem 1]{JeonSuh:14}, we have $\sum_{j=1}^{N}d_{ij}^{[\sf d]}+\sum_{j=1}^{N}d_{kj}^{[\sf u]}\leq \min\{M,N\}$. Hence, by summing up all such bounds for $i,k\in[1:K]$ satisfying $i\neq k$, we have 
\begin{align}\label{eq:converse 1}
\sum_{i=1}^{K}\sum_{j=1}^{N}d_{ij}^{[\sf d]}+\sum_{k=1}^{K}\sum_{j=1}^{N}d_{kj}^{[\sf u]}\leq K\min\{M,N\}, 
\end{align}
which completes the proof.
\end{proof}
\begin{remark}
As another approach to obtain an upper bound on the sum DoF, one may consider the case in which all the DL messages (or UL messages) are eliminated, which does not decrease 
$\sum_{i=1}^{K}\sum_{j=1}^{N}d_{ij}^{[\sf u]}$ (or $\sum_{i=1}^{K}\sum_{j=1}^{N}d_{ij}^{[\sf d]}$). Then, from Theorem \ref{thm:HD_case}, we have $\sum_{i=1}^{K}\sum_{j=1}^{N}d_{ij}^{[\sf u]}\leq d_{\Sigma, HD}$ (or $\sum_{i=1}^{K}\sum_{j=1}^{N}d_{ij}^{[\sf d]}\leq d_{\Sigma, HD}$), 
which implies that
\begin{align}\label{eq:converse 2}
d_{\Sigma}\leq 2d_{\Sigma, HD}.
\end{align}
However, it can be easily checked that the upper bound~\eqref{dof_converse theorem} in Theorem~\ref{thm:dof_2} is tighter than~\eqref{eq:converse 2} for  all values of $K$, $M$, and $N$.\hfill$\lozenge$
\end{remark}


\begin{remark}[Optimality condition]
As mentioned in Remark \ref{re:single_cell}, $d_{\Sigma}$ has been characterized for the single-cell case \cite{Jeon--Chae--Lim2015_ISIT, Bai_ISIT} .
Hence, focus on the multi-cell case where $K\geq 2$. 
By comparing the lower and upper bounds in Theorems \ref{thm:dof_1} and \ref{thm:dof_2}, it can be seen that $d_{\Sigma}=KN$ when $M\geq \frac{K^2N}{K+1}$. In addition, $d_{\Sigma}$ converges to $KM$ as $N$ increases, demonstrating that the lower and upper bounds in  Theorems \ref{thm:dof_1} and \ref{thm:dof_2} are asymptotically tight in the limit of large $N$.
\hfill$\lozenge$
\end{remark}

Let us now compare the sum DoFs of the FD and HD cellular networks.
It is easily seen that the achievable sum DoF in Theorem \ref{thm:dof_1} is strictly greater than that in Theorem \ref{thm:HD_case} if $M<KN$. In particular, as $K$ increases, the multiplicative gap between $d_{\Sigma}$ and $d_{\Sigma, HD}$ is lower bounded by 
\begin{align}
\frac{d_{\Sigma}}{d_{\Sigma, HD}}\geq 1+\frac{MN}{M^2+N^2+MN},
\end{align}
from the first cases in Theorems \ref{thm:dof_1} and \ref{thm:HD_case}.
Therefore, \emph{the additive gap between $d_{\Sigma}$ and $d_{\Sigma, HD}$ becomes arbitrarily large as $K$ increases.}
Furthermore, for both FD and HD cases, the sum DoFs increase linearly proportionally with increasing $K$, showing that \emph{the gain from spectrum sharing among cells is significant} in spite of various interference sources as seen in Fig. \ref{Fig:Channel_model}.


\begin{figure}[t!]
\begin{center}
\includegraphics[width=0.55\textwidth]{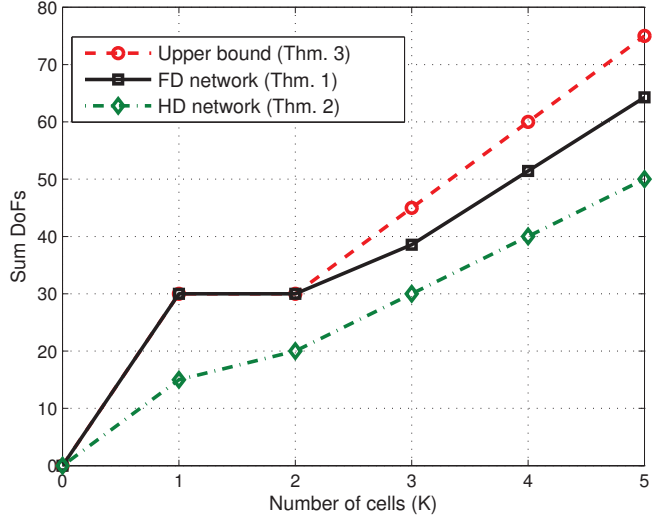}
\end{center}
\vspace{-0.2in}
\caption{Sum DoFs with respect to $K$ when $M=15$ and $N=30$.}
\vspace{-0.1in}
\label{Fig:sum dof with K}
\end{figure}

\begin{example}[DoF comparison with respect to $K$] \label{ex:ex1}
Figure~\ref{Fig:sum dof with K} plots the sum DoFs of the FD and HD cellular networks with respect to $K$ when $M=15$ and $N=30$. It can be seen that the sum DoFs increase with respect to $K$, showing the gain from spectrum sharing among cells. Furthermore, FD operation at BSs strictly improves the sum DoF compared to the conventional HD operation at BSs for all $K\geq 1$ and, moreover, the additive sum DoF gap increases as $K$ increases.  \hfill$\lozenge$
\end{example}

\begin{figure}[t!]
\begin{center}
\includegraphics[width=0.55\textwidth]{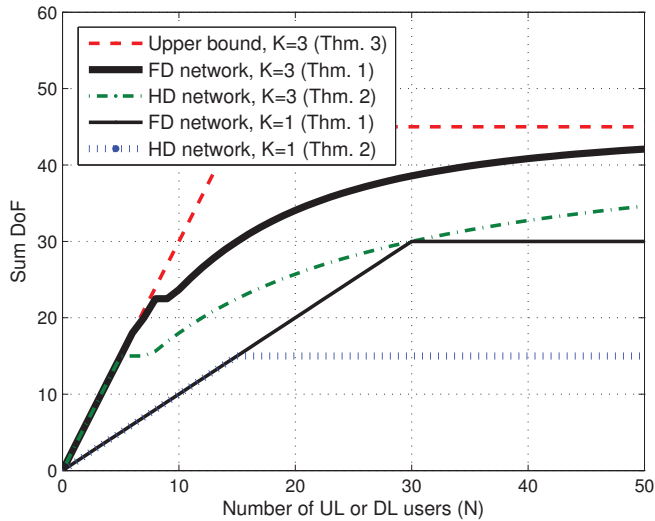}
\end{center}
\vspace{-0.2in}
\caption{Sum DoFs with respect to $N$ when $M=15$.}
\vspace{-0.1in}
\label{Fig:sum dof with N}
\end{figure}

\begin{example}[DoF comparison with respect to $N$]\label{ex:ex2}
Figure~\ref{Fig:sum dof with N} plots the sum DoFs of the FD and HD cellular networks with respect to $N$ when $K=3$ and $M=15$. For comparison, we also plot the sum DoFs for the single-cell case, i.e., $K=1$ and $M=15$.
Again, by comparing $K=3$ and $K=1$ cases, we can observe the sum DoF improvement from spectrum sharing among cells and FD operation at BSs. In particular, FD operation at BSs can strictly improve the sum DoF compared to the HD case if $N\geq \frac{M}{K}$.   \hfill$\lozenge$
\end{example}

\section{Achievability}\label{sec:achievability}

In this section, we prove Theorem \ref{thm:dof_1}.
For $M\geq KN$, HD operation at BSs is enough to achieve $d_{\Sigma}=KN$ from Theorem \ref{thm:HD_case}.
Therefore, we focus on the case where $M<KN$ in this section. 
Unlike the single-cell case~\cite{Jeon--Chae--Lim2015_ISIT}, we need to take into account for UL and DL inter-cell interferences as well as BS-to-BS interference for the multi-cell case as seen in Fig. \ref{Fig:Channel_model}.
In the following we propose two interference management schemes and show that the maximum sum DoF achievable by two proposed schemes coincides with the lower bound on $d_{\Sigma}$ in Theorem \ref{thm:dof_1}.

\subsection{Scheme 1: UL and DL Interference Alignment} \label{subsec:scheme1}
The first scheme applies 1) UL IA performed by UL users for aligning UL inter-cell and user-to-user interferences and 2) DL IA performed by BSs for aligning DL intra-cell, DL inter-cell, and BS-to-BS interferences. More specifically, each transmit and receive antenna of BSs is treated as a virtual user to apply DL IA.
We will show that, for $M<KN$, the following sum DoF is achievable:
\begin{align}\label{eq:Dof1}
d_{\Sigma,1}=\max_{\substack{
            \lambda_1,\lambda_2\in(0,1]\\
            \lambda_1(N+\min\{M,K(N-1)\}+\lambda_2N\leq M\\
           \lambda_1+\lambda_2\left(1+\frac{N}{M}\right)\leq 1\\
           }}KN(\lambda_1+\lambda_2).
            \end{align}


In order to establish \eqref{eq:Dof1}, assume that $\lambda_1$ and $\lambda_2$ satisfy the three constraints in  \eqref{eq:Dof1} from now on.
Define $\mathcal{S}_T^{[\sf u]}=[0:T-1]^{KN(KN+KM)}$ and $\mathcal{S}_T^{[\sf d]}=[0:T-1]^{KM(KN+KM)}$, where $T$ will be specified later on. For UL messages, we divide $W^{[\sf u]}_{ij}$, $i\in[1:K]$ and $j\in[1:N]$, into $T^{KN(KN+KM)}$ submessages $\{W^{[\sf u](\mathbf{s}^{[\sf u]})}_{ij}\}_{\mathbf{s}^{[\sf u]}\in \mathcal{S}_T^{[\sf u]}}$. A length-$n$ Gaussian codeword associated with $W^{[\sf u](\mathbf{s}^{[\sf u]})}_{ij}$ is denoted by
$\big[\begin{array}{cccc}c^{[\sf u](\mathbf{s}^{[\sf u]})}_{ij}[1],& c^{[\sf u](\mathbf{s}^{[\sf u]})}_{ij}[2], &\cdots, & c^{[\sf u](\mathbf{s}^{[\sf u]})}_{ij}[n]\end{array}\big]$,
where its coefficients are generated i.i.d. from $\mathcal{N}(0,P)$. For DL messages, we first divide $W^{[\sf d]}_{ij}$, $i\in[1:K]$ and $j\in[1:N]$, into submessages $\{W^{[\sf d]}_{ij,k}\}_{k\in[1:M]}$. Then, we further divide $W^{[\sf d]}_{ij,k}$ into $T_{\sf d}^{KM(KN+KM)}$ submessages $\{W^{[\sf d](\mathbf{s}^{[\sf d]})}_{ij,k}\}_{\mathbf{s}^{[\sf d]}\in \mathcal{S}_{T_{\sf d}}^{[\sf d]}}$, where $T_{\sf d}=\left(\frac{\lambda_2}{\lambda_1M}^{\frac{1}{KM(KN+KM)}}\right)T^{\frac{N}{M}}-1$. A length-$n$ Gaussian codeword associated with $W^{[\sf d](\mathbf{s}^{[\sf d]})}_{ij,k}$ is denoted by
$\big[\begin{array}{cccc}c^{[\sf d](\mathbf{s}^{[\sf d]})}_{ij,k}[1],& c^{[\sf d](\mathbf{s}^{[\sf d]})}_{ij,k}[2], &\cdots, & c^{[\sf d](\mathbf{s}^{[\sf d]})}_{ij,k}[n]\end{array} \big]$, where its coefficients are generated i.i.d. from $\mathcal{N}(0,P)$.

Let $d=\frac{1}{\lambda_1}(T+1)^{KN(KM+KN)}$. Communication takes place over a block of $nd$ time slots. At every time slot, each of the codewords defined above is transmitted through a length-$d$ time-extended beamforming vector.
We denote the length-$d$ time-extended transmit and received signal vectors as
\begin{align}
\bar{\mathbf{x}}^{[\sf u]}_{ij}[m]&=\left[\begin{array}{cccc} x^{[\sf u]}_{ij}[(m-1)d+1], & x^{[\sf u]}_{ij}[(m-1)d+2], &\cdots, &x^{[\sf u]}_{ij}[md] \end{array}\right]^\dagger\in\mathbb{R}^{d\times 1},\nonumber\\
\bar{\mathbf{x}}^{[\sf bs]}_{i}[m]&=\left[\begin{array}{cccc} \mathbf{x}^{[\sf bs]}_{i}[(m-1)d+1], & \mathbf{x}^{[\sf bs]}_{i}[(m-1)d+2], &\cdots, &\mathbf{x}^{[\sf bs]}_{i}[md] \end{array}\right]^\dagger\in\mathbb{R}^{Md\times 1},\nonumber\\
\bar{\mathbf{y}}^{[\sf bs]}_{i}[m]&=\left[\begin{array}{cccc} \mathbf{y}^{[\sf bs]}_{i}[(m-1)d+1], & \mathbf{y}^{[\sf bs]}_{i}[(m-1)d+2], &\cdots, &\mathbf{y}^{[\sf bs]}_{i}[md] \end{array}\right]^\dagger\in\mathbb{R}^{Md\times 1},\nonumber\\
\bar{\mathbf{y}}^{[\sf d]}_{ij}[m]&=\left[\begin{array}{cccc} y^{[\sf d]}_{ij}[(m-1)d+1], & y^{[\sf d]}_{ij}[(m-1)d+2], &\cdots, &y^{[\sf d]}_{ij}[md] \end{array}\right]^\dagger\in\mathbb{R}^{d\times 1},
\end{align}
where $m\in[1:n]$, $i\in[1:K]$, and $j\in[1:N]$. Then, from \eqref{channel model1} and \eqref{channel model2},
\begin{align}\label{eq:input-output-extended}
\bar{\mathbf{y}}^{[\sf d]}_{ij}[m]&=\sum_{k=1}^{K}\mathbf{\bar{G}}_{ij,k}[m]\bar{\mathbf{x}}^{[\sf bs]}_{k}[m]+\sum_{k=1}^{K}\sum_{l=1}^{N}\mathbf{\bar{H}}_{ij,kl}[m]\bar{\mathbf{x}}^{[\sf u]}_{kl}[m]+\mathbf{\bar{z}}_{ij}^{[\sf d]}[m], \nonumber \\
\bar{\mathbf{y}}^{[\sf bs]}_{i}[m]&=\sum_{j=1}^{K}\sum_{k=1}^{N}\mathbf{\bar{F}}_{i,jk}[m]\bar{\mathbf{x}}^{[\sf u]}_{jk}[m]+\sum_{j=1}^K\mathbf{\bar{B}}_{ij}[m]\bar{\mathbf{x}}^{[\sf bs]}_{j}[m]+\mathbf{\bar{z}}^{[\sf bs]}_i[m], 
\end{align}
where
\begin{align}
\mathbf{\bar{G}}_{ij,k}[m]&=\text{diag}(\mathbf{g}_{ij,k}[(m-1)d+1],\cdots,\mathbf{g}_{ij,k}[md])\in\mathbb{R}^{d\times Md},\nonumber\\
\mathbf{\bar{F}}_{i,jk}[m]&=\text{diag}(\mathbf{f}_{i,jk}[(m-1)d+1],\cdots,\mathbf{f}_{i,jk}[md])\in\mathbb{R}^{Md\times d},\nonumber\\
\mathbf{\bar{H}}_{ij,kl}[m]&=\text{diag}(h_{ij,kl}[(m-1)d+1],\cdots,h_{ij,kl}[md])\in\mathbb{R}^{d\times d},\nonumber\\
\mathbf{\bar{B}}_{ij}[m]&=\text{diag}(\mathbf{B}_{i,j}[(m-1)d+1],\cdots,\mathbf{B}_{i,j}[md])\in\mathbb{R}^{Md\times Md}
\end{align}
and
\begin{align}
\mathbf{\bar{z}}^{[\sf bs]}_i[m]&=\left[\begin{array}{cccc} \mathbf{z}^{[\sf bs]}_{i}[(m-1)d+1], & \mathbf{z}^{[\sf bs]}_{i}[(m-1)d+2], &\cdots, &\mathbf{z}^{[\sf bs]}_{i}[md] \end{array}\right]\in\mathbb{R}^{Md\times 1},\nonumber\\
\mathbf{\bar{z}}^{[\sf d]}_{ij}[m]&=\left[\begin{array}{cccc}z^{[\sf d]}_{ij}[(m-1)d+1], & z^{[\sf d]}_{ij}[(m-1)d+2], &\cdots, &z^{[\sf d]}_{ij}[md] \end{array}\right]\in\mathbb{R}^{d\times 1}.
\end{align}

We further define $\mathbf{\bar{B}}_{ij,kl}[m]$, $\mathbf{\bar{B}}_{i,kl}[m]$, $\mathbf{\bar{G}}_{ij,kl}[m]$, and $\mathbf{\bar{F}}_{ij,kl}[m]$ as
\begin{align}
\mathbf{\bar{B}}_{ij,kl}[m]&=\text{diag}(b_{ij,kl}[(m-1)d+1],\cdots,b_{ij,kl}[md])\in\mathbb{R}^{d\times d},\nonumber\\
\mathbf{\bar{B}}_{i,kl}[m]&=\text{diag}(\mathbf{b}_{i,kl}[(m-1)d+1],\cdots,\mathbf{b}_{i,kl}[md])\in\mathbb{R}^{Md\times d},\nonumber\\
\mathbf{\bar{G}}_{ij,kl}[m]&=\text{diag}({g}_{ij,kl}[(m-1)d+1],\cdots,{g}_{ij,kl}[md])\in\mathbb{R}^{d\times d},\nonumber\\
\mathbf{\bar{F}}_{ij,kl}[m]&=\text{diag}(f_{ij,kl}[(m-1)d+1],\cdots,f_{ij,kl}[md])\in\mathbb{R}^{d\times d}.
\end{align}

\subsubsection{Transmit Beamforming for UL IA} \label{subsec:UL_trans} First, we will explain the transmit beamforming strategy of each UL user, which is designed for aligning UL inter-cell and user-to-user interferences. To this end, we adapt a recently-developed asymptotic signal space IA framework in~\cite{Ilan13,JeonSuh:14}.

For $m\in [1:n]$ and $\mathbf{s}^{[\sf u]}\in \mathcal{S}_T^{[\sf u]}$, $c^{[\sf u](\mathbf{s}^{[\sf u]})}_{ij}[m]$ is transmitted by a length-$d$ time-extended beamforming vector $\mathbf{\bar{v}}_{ij}^{[\sf u](\mathbf{s}^{[\sf u]})}[m]$ as follows:
\begin{align}\label{eq:Tx_BF1_UL}
\bar{\mathbf{x}}^{[\sf u]}_{ij}[m]=\gamma^{[\sf u]}_{ij}\sum_{\mathbf{s}^{[\sf u]}\in \mathcal{S}_T^{[\sf u]}}\mathbf{\bar{v}}_{ij}^{[\sf u](\mathbf{s}^{[\sf u]})}[m]c^{[\sf u](\mathbf{s}^{[\sf u]})}_{ij}[m],
\end{align}
where $\gamma^{[\sf u]}_{ij}$ is chosen to satisfy the average power constraint $P$. Since the construction of $\mathbf{\bar{v}}_{ij}^{[\sf u](\mathbf{s}^{[\sf u]})}[m]$ is identical for all $m\in[1:n]$, we assume $m=1$ and omit the index $m$ from now on.

For $\mathbf{s}^{[\sf u]}=\left[\begin{array}{ccccc}s_{1,1}^{[\sf u]},& s_{1,2}^{[\sf u]},&\cdots, & s_{(KN+KM),KN}^{[\sf u]}\end{array}\right],$ we define
\begin{align} \label{eq:Tx_BF1_overall_UL}
v^{[\sf u]{(\mathbf{s}^{[\sf u]})}}[t]=\prod_{1\leq p_1\leq (KN+KM), ~1\leq p_2\leq KN}{\alpha_{p_1,p_2}[t]}^{s_{p_1,p_2}^{[\sf u]}}
\end{align}
for $t\in [1:d]$ and $\mathbf{\bar{v}}^{[\sf u]{(\mathbf{s}^{[\sf u]})}}=\left[\begin{array}{ccccc}v^{[\sf u]{(\mathbf{s}^{[\sf u]})}}[1],& v^{[\sf u]{(\mathbf{s}^{[\sf u]})}}[2] ,&\cdots, & v^{[\sf u]{(\mathbf{s}^{[\sf u]})}}[d]\end{array}\right]^\dagger$, where
\begin{align}\label{eq:Tx_BF1_interference_UL}
&\alpha_{p_1,p_2}[t]=\left\{\begin{array}{lll} h_{ij,kl}[t] \quad~~ \text{if $p_1\leq KN$,}\\
f_{qr,kl}[t] \quad~~ \text{if $p_1> KN$ and $i\neq k$,}\\
1  \quad\quad\quad~~~ \text{otherwise,}
\end{array} \right.
\end{align}
and $i=\lceil\frac{p_1}{N}\rceil$, $j=p_1-N(\lceil\frac{p_1}{N}\rceil-1)$, $k=\lceil\frac{p_2}{N}\rceil$, $l=p_2-N(\lceil\frac{p_2}{N}\rceil-1)$, $q=\left\lceil\frac{p_1-KN}{M}\right\rceil$, $r=p_1-M(\left\lceil\frac{p_1-KN}{M}\right\rceil-1)$.
Then, we set $\mathbf{\bar{v}}_{ij}^{[\sf u]{(\mathbf{s}^{[\sf u]})}}=\mathbf{\bar{v}}^{[\sf u]{(\mathbf{s}^{[\sf u]})}}$
for all $i\in[1:K]$, $j\in[1:N]$, and $\mathbf{s}^{[\sf u]}\in\mathcal{S}_T^{[\sf u]}$.

\subsubsection{Transmit Beamforming for DL IA} We will now explain the transmit beamforming strategy of each BS, which is designed for aligning DL intra-cell, DL inter-cell, and BS-to-BS interferences. Here, we adapt an IA scheme in~\cite{Viveck2:09}, originally proposed for $X$-networks by treating each transmit and receive antenna of BSs as a virtual user.

Let $\bar{\mathbf{x}}^{[\sf bs]}_{i,k}[m]$ denote the length-$d$ time-extended transmit signal vector for the $k$th transmit antenna of BS $i$. For $m\in [1:n]$, $k\in[1:M]$, and $\mathbf{s}^{[\sf d]}\in \mathcal{S}_{T_{\sf d}^{[\sf d]}}$, $c^{[\sf d](\mathbf{s}^{[\sf d]})}_{ij,k}[m]$ is transmitted by a length-$d$ time-extended beamforming vector $\mathbf{\bar{v}}_{ij,k}^{[\sf bs](\mathbf{s}^{[\sf d]})}[m]$ as follows:
\begin{align}\label{eq:Tx_BF1_DL}
\bar{\mathbf{x}}^{[\sf bs]}_{i,k}[m]=\gamma^{[\sf d]}_{i,k}\sum_{j=1}^{N}\sum_{\mathbf{s}^{[\sf d]}\in \mathcal{S}_{T_{\sf d}}^{[\sf d]}}\mathbf{\bar{v}}_{ij,k}^{[\sf bs](\mathbf{s}^{[\sf d]})}[m]c^{[\sf d](\mathbf{s}^{[\sf d]})}_{ij,k}[m],
\end{align}
where $\gamma^{[\sf d]}_{i,k}$ is chosen to satisfy the average power constraint $P/M$, which is set to be equal for all transmit antennas. Similar to the UL case, since the construction of $\mathbf{\bar{v}}_{ij,k}^{[\sf bs](\mathbf{s}^{[\sf d]})}[m]$ is identical for all $m\in[1:n]$, we assume $m=1$ and omit the index $m$ in the sequel.

For $\mathbf{s}^{[\sf d]}=\left[\begin{array}{ccccc}s_{1,1}^{[\sf d]},& s_{1,2}^{[\sf d]},&\cdots, & s_{(KN+KM),KM}^{[\sf d]}\end{array}\right],$ we define
\begin{align} \label{eq:Tx_BF1_overall_DL}
v_{j}^{[\sf bs]{(\mathbf{s}^{[\sf d]})}}[t]=\prod_{1\leq p_1 \leq (KN+KM), ~1\leq p_2\leq KM}\beta_{j,p_1,p_2}[t]^{s^{[\sf d]}_{p_1,p_2}},
\end{align}
for $t\in [1:d]$ and $\mathbf{\bar{v}}_{j}^{[\sf bs]{(\mathbf{s}^{[\sf d]})}}=\left[\begin{array}{ccccc}v_{j}^{[\sf bs]{(\mathbf{s}^{[\sf d]})}}[1],& v_{j}^{[\sf bs]{(\mathbf{s}^{[\sf d]})}}[2] ,&\cdots, & v_{j}^{[\sf bs]{(\mathbf{s}^{[\sf d]})}}[d]\end{array}\right]^\dagger$, where
\begin{align}\label{eq:Tx_BF1_interference_DL}
\beta_{j,p_1,p_2}[t]=\left\{\begin{array}{lll} &g_{ik,lq}[t] \quad~~\text{if $p_1\leq KN$ and $i\neq l$,}\\
&g_{ik,lq}[t] \quad~~\text{if $p_1\leq KN$ and $j\neq k$,}\\
&b_{rs,lq}[t]  \quad~~ \text{if $p_1>KN$ and $r\neq l$, }\\
&1  \quad\quad\quad~~~ \text{otherwise, }
\end{array} \right.
\end{align}
and $i=\lceil\frac{p_1}{N}\rceil$, $k=p_1-N(\lceil\frac{p_1}{N}\rceil-1)$, $l=\lceil\frac{p_2}{M}\rceil$, $q=p_2-M(\lceil\frac{p_2}{M}\rceil-1)$, $r=\left\lceil\frac{p_1-KN}{M}\right\rceil$, $s=p_1-M(\left\lceil\frac{p_1-KN}{M}\right\rceil-1)$. Then, we set $\mathbf{\bar{v}}_{ij,k}^{[\sf bs]{(\mathbf{s}^{[\sf d]})}}=\mathbf{\bar{v}}_{j}^{[\sf bs]{(\mathbf{s}^{[\sf d]})}}$
 for all $i\in[1:K]$, $j\in[1:N]$, $k\in[1:M]$, and $\mathbf{s}^{[\sf d]}\in\mathcal{S}_{T_{\sf d}^{[\sf d]}}$.\\

\subsubsection{Decoding at BSs}
From the proposed beamforming strategy, the length-$d$ time extended received signal vector of BS $i$ is given by
\begin{align}
\bar{\mathbf{y}}^{[\sf bs]}_{i}&=\underbrace{\sum_{k=1}^{N}\gamma^{[\sf u]}_{ik}\sum_{\mathbf{s}^{[\sf u]}\in \mathcal{S}_T^{[\sf u]}}\mathbf{\bar{F}}_{i,ik}\mathbf{\bar{v}}_{ik}^{[\sf u](\mathbf{s}^{[\sf u]})}c^{[\sf u](\mathbf{s}^{[\sf u]})}_{ik}}_{\text{Desired signals}}\nonumber\\
&+\underbrace{\sum_{j=1,j\neq i }^{K}\gamma^{[\sf u]}_{jk}\sum_{k=1}^{N}\sum_{\mathbf{s}^{[\sf u]}\in \mathcal{S}_T^{[\sf u]}}\mathbf{\bar{F}}_{i,jk}\mathbf{\bar{v}}_{jk}^{[\sf u](\mathbf{s}^{[\sf u]})}c^{[\sf u](\mathbf{s}^{[\sf u]})}_{jk}}_{\text{UL inter-cell interferences}}\nonumber\\
&+\underbrace{\sum_{j=1}^K\sum_{l=1}^M \gamma^{[\sf d]}_{j,l}\sum_{k=1}^{N}\sum_{\mathbf{s}^{[\sf d]}\in \mathcal{S}_{T_{\sf d}}^{[\sf d]}}\mathbf{\bar{B}}_{i,jl}\mathbf{\bar{v}}_{jk,l}^{[\sf bs](\mathbf{s}^{[\sf d]})}c^{[\sf d](\mathbf{s}^{[\sf d]})}_{jk,l}}_{\text{BS-to-BS interferences}}+\mathbf{\bar{z}}^{[\sf bs]}_i.
\end{align}

Now we will examine the dimensions of the desired signals and interference signals occupied at each BS. First, we investigate the dimension of UL inter-cell interferences. Since the cardinality of $\left\{\mathbf{\bar{v}}_{jk}^{[\sf u](\mathbf{s}^{[\sf u]})}\right\}_{j\in[1:K]\setminus i,k\in[1:N],\mathbf{s^{[\sf u]}}\in \mathcal{S}_T^{[\sf u]}}$ is given by $(K-1)N T^{KN(KN+KM)}$, 
\begin{align}
\text{span}\left(\left\{\bar{\mathbf{F}}_{i,jk}\mathbf{\bar{v}}_{jk}^{[\sf u](\mathbf{s}^{[\sf u]})}\right\}_{j\in[1:K]\setminus i,k\in[1:N],\mathbf{s}^{[\sf u]}\in \mathcal{S}_T^{[\sf u]}}\right)
\end{align}
occupies at most $(K-1)N T^{KN(KN+KM)}$ dimensional subspace. In addition, from the UL beamforming construction
\eqref{eq:Tx_BF1_overall_UL}, for given $i\in[1:K]$ and $l\in[1:M]$,
\begin{align}
\bar{\mathbf{F}}_{il,jk}\mathbf{\bar{v}}_{jk}^{[\sf u](\mathbf{s}^{[\sf u]})}\in \left\{\mathbf{\bar{v}}_{jk}^{[\sf u]({\mathbf{s}^{[\sf u]}}')}\right\}_{{\mathbf{s}^{[\sf u]}}'\in \mathcal{S}^{[\sf u]}_{T+1}}
\end{align}
for all $j\in [1:K]\setminus i$, $k\in[1:N]$, and $\mathbf{s}^{[\sf u]}\in \mathcal{S}_T^{[\sf u]}$, showing that 
\begin{align}
\text{span}\left(\left\{\bar{\mathbf{F}}_{il,jk}\mathbf{\bar{v}}_{jk}^{[\sf u](\mathbf{s}^{[\sf u]})}\right\}_{j\in[1:K]\setminus i,k\in[1:N],\mathbf{s}^{[\sf u]}\in \mathcal{S}_T^{[\sf u]}}\right)
\end{align}
occupies at most $(T+1)^{KN(KN+KM)}$ dimensional subspace due to the fact that the cardinality of $\mathcal{S}_{T+1}^{[\sf u]}$ is given by $(T+1)^{KN(KN+KM)}$, and thus 
\begin{align}
\text{span}\left(\left\{\bar{\mathbf{F}}_{i,jk}\mathbf{\bar{v}}_{jk}^{[\sf u](\mathbf{s}^{[\sf u]})}\right\}_{j\in[1:K]\setminus i,k\in[1:N],\mathbf{s}^{[\sf u]}\in \mathcal{S}_T^{[\sf u]}}\right)
\end{align}
occupies at most $M(T+1)^{KN(KN+KM)}$ dimensional subspace. Consequently, the dimension of UL inter-cell interferences is at most $\min\{(K-1)N T^{KN(KN+KM)}, M(T+1)^{KN(KN+KM)}\}\leq \min\{(K-1)N,M\} (T+1)^{KN(KN+KM)}$.

Now we investigate the dimension of BS-to-BS interferences. From the DL beamforming construction \eqref{eq:Tx_BF1_overall_DL}, for given $i\in[1:K]$, $k\in[1:M]$, and $p\in[1:N]$,
\begin{align}
\mathbf{\bar{B}}_{ik,jl}\mathbf{\bar{v}}_{jp,l}^{[\sf bs](\mathbf{s}^{[\sf d]})}\in\left\{\mathbf{\bar{v}}_{jp,l}^{[\sf bs](\mathbf{s^{[\sf d]}}')}\right\}_{\mathbf{s^{[\sf d]}}'\in \mathcal{S}^{[\sf d]}_{T_{\sf d}+1}}
\end{align}
for all $j\in[1:K]$, $l\in[1:M]$, and $\mathbf{s}\in \mathcal{S}_{T_{\sf d}}^{[\sf d]}$, showing that 
\begin{align}
\text{span}\left(\left\{\mathbf{\bar{B}}_{ik,jl}\mathbf{\bar{v}}_{jp,l}^{[\sf bs](\mathbf{s}^{[\sf d]})}\right\}_{ j\in[1:K], l\in[1:M], \mathbf{s}^{[\sf d]}\in \mathcal{S}_{T_{\sf d}}^{[\sf d]} }\right)
\end{align}
occupies at most $(T_{\sf d}+1)^{KM(KN+KM)}$ dimensional subspace due to the fact that the cardinality of $\mathcal{S}_{T_{\sf d}+1}^{[\sf d]}$ is given by $(T_{\sf d}+1)^{KM(KN+KM)}$. Therefore, for given $i\in[1:K]$ and $k\in[1:M]$, 
\begin{align}
\text{span}\left(\left\{\mathbf{\bar{B}}_{ik,jl}\mathbf{\bar{v}}_{jp,l}^{[\sf bs](\mathbf{s}^{[\sf d]})}\right\}_{j\in[1:K], l\in[1:M], p\in[1:N], \mathbf{s}^{[\sf d]}\in \mathcal{S}_{T_{\sf d}}^{[\sf d]} }\right)
\end{align}
 occupies at most $N(T_{\sf d}+1)^{KM(KN+KM)}$ dimensional subspace and, consequently,
\begin{align}
\text{span}\left(\left\{\mathbf{\bar{B}}_{i,jl}\mathbf{\bar{v}}_{jp,l}^{[\sf bs](\mathbf{s}^{[\sf d]})}\right\}_{j\in[1:K], l\in[1:M], p\in[1:N], \mathbf{s}^{[\sf d]}\in \mathcal{S}_{T_{\sf d}}^{[\sf d]} }\right)
\end{align}
occupies at most $MN(T_{\sf d}+1)^{KM(KN+KM)}$ dimensional subspace. In summary, the overall dimension of UL inter-cell and BS-to-BS interferences at BS $i$ is at most 
\begin{align}
\min\{(K-1)N,M\} (T+1)^{KN(KN+KM)}+MN(T_{\sf d}+1)^{KM(KN+KM)}.
\end{align}

Finally, we examine the dimension and linear independence property of the desired UL signals. Recall that  $\left\{\mathbf{\bar{v}}_{ij}^{[\sf u](\mathbf{s}^{[\sf u]})}\right\}_{\mathbf{s}^{[\sf u]}\in \mathcal{S}_T^{[\sf u]}}$ is a set of linearly independent vectors whose cardinality is given by $T^{KN(KN+KM)}$, which is constructed independent of $\{\mathbf{\bar{F}}_{i,ij}\}_{j\in[1:N]}$ (see \eqref{eq:Tx_BF1_overall_UL} and \eqref{eq:Tx_BF1_interference_UL}). Therefore,  $\left\{\mathbf{\bar{F}}_{i,ij}\mathbf{\bar{v}}_{ij}^{[\sf u](\mathbf{s}^{[\sf u]})}\right\}_{j\in[1:N], \mathbf{s}^{[\sf u]}\in \mathcal{S}_T^{[\sf u]}}$, the desired UL signals for BS $i$, is a set of linearly independent vectors whose cardinality is given by $NT^{KN(KN+KM)}$. Furthermore, since channels are generic,
each of $\left\{\mathbf{\bar{F}}_{i,ij}\mathbf{\bar{v}}_{ij}^{[\sf u](\mathbf{s}^{[\sf u]})}\right\}_{j\in[1:N], \mathbf{s}^{[\sf u]}\in \mathcal{S}_T^{[\sf u]}}$ is linearly independent of UL inter-cell and BS-to-Bs interferences if
\begin{align}
&NT^{KN(KN+KM)}+\min\{(K-1)N,M\} (T+1)^{KN(KN+KM)}+MN(T_{\sf d}+1)^{KM(KN+KM)}\nonumber\\
&=NT^{KN(KN+KM)}+\min\{(K-1)N,M\} (T+1)^{KN(KN+KM)}+N\frac{\lambda_2}{\lambda_1}T^{KN(KN+KM)}\nonumber\\
&\leq \frac{M}{\lambda_1}(T+1)^{KN(KM+KN)},
\end{align}
which is satisfied from the assumption $\lambda_1(N+\min\{M,K(N-1)\}+\lambda_2N\leq M$ in \eqref{eq:Dof1}. Therefore, BS $i$ can recover its desired UL messages by nulling out all UL inter-cell and BS-to-BS interferences.
\subsubsection{Decoding at DL users} From the proposed beamforming strategy, the length-$d$ time extended received signal vector of DL user $(i,j)$ is given by
\begin{align}
\bar{\mathbf{y}}^{[\sf d]}_{ij}&=\underbrace{\sum_{l=1}^{M}\gamma^{[\sf d]}_{i,l}\sum_{\mathbf{s}^{[\sf d]}\in \mathcal{S}_{T_{\sf d}}^{[\sf d]}}\mathbf{\bar{G}}_{ij,il}\mathbf{\bar{v}}_{ij,l}^{[\sf bs](\mathbf{s}^{[\sf d]})}c^{[\sf d](\mathbf{s}^{[\sf d]})}_{ij,l}}_{\text{Desired signals}}\nonumber\\
&+\underbrace{\sum_{l=1}^{M}\gamma^{[\sf d]}_{i,l}\sum_{p=1, p\neq j}^{N}\sum_{\mathbf{s}^{[\sf d]}\in \mathcal{S}_{T_{\sf d}}^{[\sf d]}}\mathbf{\bar{G}}_{ij,il}\mathbf{\bar{v}}_{ip,l}^{[\sf bs](\mathbf{s}^{[\sf d]})}c^{[\sf d](\mathbf{s}^{[\sf d]})}_{ip,l}}_{\text{DL intra-cell interferences}}\nonumber\\
&+\underbrace{\sum_{k=1,k\neq i}^K\sum_{l=1}^{M}\gamma^{[\sf d]}_{i,l}\sum_{p=1}^{N}\sum_{\mathbf{s}^{[\sf d]}\in \mathcal{S}_{T_{\sf d}}^{[\sf d]}}\mathbf{\bar{G}}_{ij,kl}[m]\mathbf{\bar{v}}_{kp,l}^{[\sf bs](\mathbf{s}^{[\sf d]})}c^{[\sf d](\mathbf{s}^{[\sf d]})}_{kp,l}}_{\text{DL inter-cell interferences}}\nonumber\\
&+\underbrace{\sum_{k=1}^{K}\sum_{l=1}^{N}\gamma^{[\sf u]}_{kl}\sum_{\mathbf{s}^{[\sf u]}\in \mathcal{S}_T^{[\sf u]}}\mathbf{\bar{H}}_{ij,kl}\mathbf{\bar{v}}_{kl}^{[\sf u](\mathbf{s}^{[\sf u]})}c^{[\sf u](\mathbf{s}^{[\sf u]})}_{kl}}_{\text{User-to-user interferences}}+\mathbf{\bar{z}}_{ij}^{[\sf d]}.
\end{align}

Now we will examine the dimension of the desired signals and interference signals occupied at each DL user. First, we investigate the overall dimension of DL intra-cell and inter-cell interferences. From the DL beamforming \eqref{eq:Tx_BF1_overall_DL}, for given $i\in[1:K]$, $j\in[1:N]$, and $p\in[1:N]$,
\begin{align}
\mathbf{\bar{G}}_{ij,kl}\mathbf{\bar{v}}_{kp,l}^{[\sf bs](\mathbf{s}^{[\sf d]})}\in\left\{\mathbf{\bar{v}}_{kp,l}^{[\sf bs]({\mathbf{s}^{[\sf d]}}')}\right\}_{{\mathbf{s}^{[\sf d]}}'\in \mathcal{S}^{[\sf d]}_{T_{\sf d}+1}}
\end{align}
for all $k\in[1:K]$, $l\in[1:M]$, and $\mathbf{s}^{[\sf d]}\in \mathcal{S}_{T_{\sf d}}^{[\sf d]}$ except for the case where $p=j$ and $k=i$ showing that
\begin{align}
\text{span}\left(\left\{\mathbf{\bar{G}}_{ij,kl}\mathbf{\bar{v}}_{kp,l}^{[\sf bs](\mathbf{s}^{[\sf d]})}\right\}_{ k\in[1:K], l\in[1:M], \mathbf{s}^{[\sf d]}\in \mathcal{S}_{T_{\sf d}}^{[\sf d]}~ \text{except $p=j$ and $k=i$} }\right)
\end{align}
occupies $(T_{\sf d}+1)^{KM(KN+KM)}$ dimensional subspace due to the fact that the cardinality of $\mathcal{S}_{T_{\sf d}+1}^{[\sf d]}$ is given by $(T_{\sf d}+1)^{KM(KN+KM)}$. Therefore,
\begin{align}
\text{span}\left(\left\{\mathbf{\bar{G}}_{ij,kl}\mathbf{\bar{v}}_{kp,l}^{[\sf bs](\mathbf{s}^{[\sf d]})}\right\}_{p\in[1:N], k\in[1:K], l\in[1:M], \mathbf{s}^{[\sf d]}\in \mathcal{S}_{T_{\sf d}}^{[\sf d]}~ \text{except  $j=p$ and $k=i$} }\right)
\end{align}
occupies at most $N(T_{\sf d}+1)^{KM(KN+KM)}$ dimensional subspace, which means that the overall dimension of DL intra-cell and inter-cell interferences is at most $N(T_{\sf d}+1)^{KM(KN+KM)}$.

Now we investigate the dimension of user-to-user interferences. From the UL beamforming \eqref{eq:Tx_BF1_overall_UL}, for given $i\in[1:K]$ and $j\in[1:N]$,
\begin{align}
\bar{\mathbf{H}}_{ij,kl}\mathbf{\bar{v}}_{kl}^{[\sf u](\mathbf{s}^{[\sf u]})}\in \left\{\mathbf{\bar{v}}_{kl}^{[\sf u]({\mathbf{s}^{[\sf u]}}')}\right\}_{{\mathbf{s}^{[\sf u]}}'\in \mathcal{S}^{[\sf u]}_{T+1}}
\end{align} for all $k\in[1:K]$, $l\in[1:N]$, and $\mathbf{s}\in \mathcal{S}_T^{[\sf u]}$, showing that
\begin{align}
\text{span}\left(\left\{\bar{\mathbf{H}}_{ij,kl}\mathbf{\bar{v}}_{kl}^{[\sf u](\mathbf{s}^{[\sf u]})}[m]\right\}_{k\in[1:K], l\in[1:N],\mathbf{s}^{[\sf u]}\in \mathcal{S}_T^{[\sf u]}}\right)
\end{align} 
occupies at most $(T+1)^{KN(KN+KM)}$ dimensional subspace due to the fact that the cardinality of $\mathcal{S}_{T+1}^{[\sf u]}$ is given by $(T+1)^{KN(KN+KM)}$. In summary, the overall dimension of DL intra-cell, DL inter-cell, and user-to-user interferences is at most $N(T_{\sf d}+1)^{KM(KN+KM)}+(T+1)^{KN(KN+KM)}$.

Finally, we examine the dimension and linear independence property of the desired DL signals. Recall that  $\left\{\mathbf{\bar{v}}_{ij,k}^{[\sf bs](\mathbf{s}^{[\sf d]})}\right\}_{\mathbf{s}^{[\sf d]}\in \mathcal{S}_{T_{\sf d}}^{[\sf d]}}$ is a set of linearly independent vectors whose cardinality is given by $T_{\sf d}^{KM(KN+KM)}$, which is constructed independent of $\{\mathbf{\bar{G}}_{ij,ik}\}_{k\in[1:M]}$ (see \eqref{eq:Tx_BF1_overall_DL} and \eqref{eq:Tx_BF1_interference_DL}). Therefore,  $\left\{\mathbf{\bar{G}}_{ij,ik}\mathbf{\bar{v}}_{ij,k}^{[\sf bs](\mathbf{s}^{[\sf d]})}\right\}_{k\in[1:M], \mathbf{s}^{[\sf d]}\in \mathcal{S}_{T_{\sf d}}^{[\sf d]}}$, the desired DL signals for DL user $(i,j)$, is a set of linearly independent vectors whose cardinality is given by $MT_{\sf d}^{KM(KN+KM)}$. Furthermore, since channels are generic, each of $\left\{\mathbf{\bar{G}}_{ij,ik}\mathbf{\bar{v}}_{ij,k}^{[\sf bs](\mathbf{s}^{[\sf d]})}\right\}_{k\in[1:M], \mathbf{s}^{[\sf d]}\in \mathcal{S}_{T_{\sf d}}^{[\sf d]}}$ is linearly independent of DL intra-cell, DL inter-cell, and user-to-user interferences if
\begin{align}
&MT_{\sf d}^{KM(KN+KM)}+N(T_{\sf d}+1)^{KM(KN+KM)}+(T+1)^{KN(KN+KM)}\nonumber\\
&\leq (M+N)(T_{\sf d}+1)^{KM(KN+KM)}+(T+1)^{KN(KN+KM)}\nonumber\\
&=\frac{\lambda_2}{\lambda_1}\left(1+\frac{N}{M}\right)T^{KN(KN+KM)}+(T+1)^{KN(KN+KM)}\nonumber\\
&\leq \frac{1}{\lambda_1}(T+1)^{KN(KM+KN)},
\end{align}
which is satisfied from the assumption $\lambda_1+\lambda_2\left(1+\frac{N}{M}\right)\leq 1$ in \eqref{eq:Dof1}. Therefore, DL user $(i,j)$ can recover its desired DL message.

\subsubsection{Achievable Sum DoF} Since total $KN(T^{KN(KN+KM)}+MT_{\sf d}^{KM(KN+KM)})$ submessages are sent during $nd=\frac{n}{\lambda_1}(T+1)^{KN(KM+KN)}$ time slots via a length-$n$ codeword each, the sum DoF of
\begin{align} \label{eq:Dof1_converse}
&\frac{KNT^{KN(KN+KM)}+KNMT_{\sf d}^{KM(KN+KM)}}{\frac{1}{\lambda_1}(T+1)^{KN(KM+KN)}} \nonumber \\
&=\frac{KNT^{KN(KN+KM)}+KNM\left(\left(\frac{\lambda_2}{\lambda_1M}^{\frac{1}{KM(KN+KM)}}\right)T^{\frac{N}{M}}-1\right)^{KM(KN+KM)}}{\frac{1}{\lambda_1}(T+1)^{KN(KM+KN)}}
\end{align}
is achievable under the constraints in \eqref{eq:Dof1}. Finally, since \eqref{eq:Dof1_converse} converges to $KN(\lambda_1+\lambda_2)$ as $T$ increases, the sum DoF in \eqref{eq:Dof1} is achievable.

\subsection{Scheme 2: UL IA and DL Interference Nulling} \label{subsec:scheme2}
The second scheme applies the same UL IA at each UL user as in the first scheme for aligning UL inter-cell and user-to-user interferences. On the other hand, for DL transmission, beamforming vectors of BSs are now designed for nulling out DL intra-cell, DL inter-cell, and BS-to-BS interferences. By employing this strategy, for $M<KN$, we will show that the following  sum DoF is achievable:
\begin{align}\label{eq:Dof2}
d_{\Sigma,2}=\max_{\substack{
\lambda_1,\lambda_2\in(0,1]\\
            \lambda_1(N+\min\{M,K(N-1)\}\leq M\\
           \lambda_1(K-1)N+\lambda_2KN\leq M\\
            \lambda_1+\lambda_2\leq 1\\
           }}KN(\lambda_1+\lambda_2).
            \end{align}

As the same manner in Section \ref{subsec:scheme1}, assume that $\lambda_1$ and $\lambda_2$ satisfy the four constraints in \eqref{eq:Dof2} from now on.
Communication takes place over a block of $nd$ slots where $d=\frac{1}{\lambda_1}(T+1)^{KN(KM+KN)}$ as in the first scheme. In addition, as aforementioned, transmit beamforming at each UL user is the same as in Section \ref{subsec:UL_trans}. On the other hand, for DL transmission, $W_{ij}^{[\sf d]}$  is now divided into $\frac{\lambda_2}{\lambda_1}T^{KN(KM+KN)}$ submessages  $\{W^{{[\sf d]}(a)}_{ij}\}_{a\in[1:\frac{\lambda_2}{\lambda_1}T^{KN(KM+KN)}]}$. A length-$n$ Gaussian codeword associated with $W^{{[\sf d]}(a)}_{ij}$ is denoted by
$[\begin{array}{cccc}c^{{[\sf d]}(a)}_{ij}[1], & c^{{[\sf d]}(a)}_{ij}[2], &\cdots, & c^{{[\sf d]}(a)}_{ij}[n]\end{array} ],$ where its coefficients are generated i.i.d. from $\mathcal{N}(0,P)$. For $m\in[1:n]$ and $a\in[1:\frac{\lambda_2}{\lambda_1}T^{KN(KM+KN)}],$ $c^{[{\sf d}](a)}_{ij}[m]$ is transmitted via a length-$d$ time-extended beamforming vector $\mathbf{\bar{v}}_{ij}^{[{\sf bs}](a)}[m]\in\mathbb{R}^{Md\times 1}$ as
\begin{align}\label{eq:Tx_BF2_DL}
\bar{\mathbf{x}}^{[\sf bs]}_{i}[m]=\gamma^{[\sf d]}_{i}\sum_{j=1}^N\sum_{a=1}^{\frac{\lambda_2}{\lambda_1}T^{KN(KM+KN)}}\mathbf{\bar{v}}_{ij}^{[{\sf bs}](a)}[m]c^{[{\sf d}](a)}_{ij}[m],
\end{align}
where $\gamma^{[\sf d]}_{i}$ is chosen to satisfy the average power constraint $P$.

\subsubsection{Transmit beamforming for DL interference nulling}
From \eqref{eq:Tx_BF1_UL} and \eqref{eq:Tx_BF2_DL},  we have
\begin{align} \label{eq:input-output_BS_Scheme2}
\bar{\mathbf{y}}^{[\sf bs]}_{i}&=\underbrace{\sum_{k=1}^{N}\gamma^{[\sf u]}_{ik}\sum_{\mathbf{s}^{[\sf u]}\in \mathcal{S}_T^{[\sf u]}}\mathbf{\bar{F}}_{i,ik}\mathbf{\bar{v}}_{ik}^{[\sf u](\mathbf{s}^{[\sf u]})}c^{[\sf u](\mathbf{s}^{[\sf u]})}_{ik}}_{\text{Desired signals}} \nonumber \\
&+\underbrace{\sum_{j=1,i\neq j}^{K}\gamma^{[\sf u]}_{jk}\sum_{k=1}^{N}\sum_{\mathbf{s}\in \mathcal{S}_T^{[\sf u]}}\mathbf{\bar{F}}_{i,jk}\mathbf{\bar{v}}_{jk}^{[\sf u](\mathbf{s}^{[\sf u]})}c^{[\sf u](\mathbf{s}^{[\sf u]})}_{jk}}_{\text{UL inter-cell interferences}} \nonumber \\
&+\underbrace{\sum_{j=1}^K\gamma^{[\sf d]}_{j}\sum_{k=1}^N\sum_{a=1}^{\frac{\lambda_2}{\lambda_1}T^{KN(KM+KN)}}\mathbf{\bar{B}}_{ij}\mathbf{\bar{v}}_{jk}^{[{\sf bs}](a)}c^{[{\sf d}](a)}_{jk}}_{\text{BS-to-BS interferences}}+\mathbf{\bar{z}}^{[\sf bs]}_i
\end{align} 
and
\begin{align} \label{eq:input-output_user_Scheme2}
\bar{\mathbf{y}}^{[\sf d]}_{ij}
&=\underbrace{\gamma^{[\sf d]}_{i}\sum_{a=1}^{\frac{\lambda_2}{\lambda_1}T^{KN(KM+KN)}}\mathbf{\bar{G}}_{ij,i}\mathbf{\bar{v}}_{ij}^{[{\sf bs}](a)}c^{[{\sf u}](a)}_{ij}}_{\text{Desired signals}} \nonumber\\
&+\underbrace{\gamma^{[\sf d]}_{i}\sum_{k=1, k\neq j}^N\sum_{a=1}^{\frac{\lambda_2}{\lambda_1}T^{KN(KM+KN)}}\mathbf{\bar{G}}_{ij,i}\mathbf{\bar{v}}_{ik}^{[{\sf bs}](a)}c^{[{\sf d}](a)}_{ik}}_{\text{DL intra-cell interferences}} \nonumber\\
&+\underbrace{\sum_{p=1, p\neq i}^{K}\gamma^{[\sf d]}_{p}\sum_{k=1}^N\sum_{a=1}^{\frac{\lambda_2}{\lambda_1}T^{KN(KM+KN)}}\mathbf{\bar{G}}_{ij,p}\mathbf{\bar{v}}_{pk}^{[{\sf bs}](a)}c^{[{\sf d}](a)}_{pk}}_{\text{DL inter-cell interferences}} \nonumber\\
&+\underbrace{\sum_{k=1}^{K}\sum_{l=1}^{N}\gamma^{[\sf u]}_{kl}\sum_{\mathbf{s}^{[\sf u]}\in \mathcal{S}_T^{[\sf u]}}\mathbf{\bar{H}}_{ij,kl}\mathbf{\bar{v}}_{kl}^{[\sf u](\mathbf{s}^{[\sf u]})}c^{[\sf u](\mathbf{s}^{[\sf u]})}_{kl}}_{\text{User-to-user interferences}}+\mathbf{\bar{z}}_{ij}^{[\sf d]}.
\end{align}

From \eqref{eq:input-output_BS_Scheme2}, in order to null out BS-to-BS interferences by zero-forcing at BS $i$,
\begin{align} \label{eq:condition_IN_inter-BS_Scheme2}
\mathbf{\bar{B}}_{qi}\mathbf{\bar{v}}_{ij}^{[{\sf bs}](a)}\perp \text{span}\left(\left\{\mathbf{\bar{F}}_{q,qk}\mathbf{\bar{v}}_{qk}^{[\sf u](\mathbf{s}^{[\sf u]})}\right\}_{k\in[1:N], \mathbf{s}^{[\sf u]}\in \mathcal{S}_T^{[\sf u]}}\right)
\end{align}
should be satisfied for all $q\in[1:K]\setminus i$, $j\in[1:K]$, and $a\in[1:\frac{\lambda_2}{\lambda_1}T^{KN(KM+KN)}].$ Similarly, from~\eqref{eq:input-output_user_Scheme2}, in order to null out DL inter-cell interferences by zero-forcing at BS $i$,
\begin{align} \label{eq:condition_IN_inter-cell_Scheme2}
\mathbf{\bar{G}}_{qp,i}\mathbf{\bar{v}}_{ij}^{[{\sf bs}](a)}\perp \text{span}\left(\left\{\mathbf{\bar{G}}_{qp,q}\mathbf{\bar{v}}_{qp}^{{[\sf bs]}(a')}\right\}_{a'\in[1:\frac{\lambda_2}{\lambda_1}T^{KN(KM+KN)}]}\right)
\end{align}
should be satisfied for all $q\in[1:K]\setminus i$, $j\in[1:N]$, $p\in[1:N]$, and $a\in[1:\frac{\lambda_2}{\lambda_1}T^{KN(KM+KN)}]$.

In order to null out DL intra-cell interferences, 
we first construct $\left\{\bar{\mathbf{w}}_{a', ij}\right\}_{a'\in[1:\frac{\lambda_2}{\lambda_1}T^{KN(KM+KN)}]}$ by choosing $\frac{\lambda_2}{\lambda_1}T^{KN(KM+KN)}$ basis vectors consisting of the null space of
\begin{align} \label{eq:inter-user interference}
\text{span}\left(\left\{\bar{\mathbf{H}}_{ij,kl}\mathbf{\bar{v}}_{kl}^{[\sf u](\mathbf{s}^{[\sf u]})}\right\}_{k\in[1:K], l\in[1:N],\mathbf{s}^{[\sf u]}\in \mathcal{S}_T^{[\sf u]}}\right),
\end{align}
which is used for the signal space of the desired submessages for DL user $(i,j)$.
This is possible since \eqref{eq:inter-user interference} occupies at most $(T+1)^{KN(KN+KM)}$ dimensional subspace, which implies that the null space of \eqref{eq:inter-user interference} occupies at least $(\frac{1}{\lambda_1}-1)(T+1)^{KN(KM+KN)}$, and
\begin{align}
\left(\frac{1}{\lambda_1}-1\right)(T+1)^{KN(KM+KN)}\geq \frac{\lambda_2}{\lambda_1}T^{KN(KM+KN)}
\end{align}
is satisfied from the condition $\lambda_1+\lambda_2\leq 1$ in \eqref{eq:Dof2}. Hence, in order to null out DL intra-cell interferences by zero-forcing at BS $i$,
\begin{align} \label{eq:condition_IN_intra-cell_Scheme2}
\mathbf{\bar{G}}_{ik,i}\mathbf{\bar{v}}_{ij}^{[{\sf bs}](a)}\perp \text{span}\left(\left\{\bar{\mathbf{w}}_{a', ik}\right\}_{a'\in[1:\frac{\lambda_2}{\lambda_1}T^{KN(KM+KN)}]}\right)
\end{align}
should be satisfied for all $j\in[1:N]$, $k\in[1:N]\setminus j$, and $a\in[1:\frac{\lambda_2}{\lambda_1}T^{KN(KM+KN)}].$

As a result, from \eqref{eq:condition_IN_inter-BS_Scheme2}, \eqref{eq:condition_IN_inter-cell_Scheme2}, and \eqref{eq:condition_IN_intra-cell_Scheme2}, $\mathbf{\bar{v}}_{ij}^{[{\sf bs}](a)}$ should be orthogonal with the following vector spaces:
\begin{align}\label{condition_Scheme2}
&\text{span}\left(\left\{\mathbf{\bar{B}}_{qi}^{-1}\mathbf{\bar{F}}_{q,qk}\mathbf{\bar{v}}_{qk}^{[\sf u](\mathbf{s}^{[\sf u]})}\right\}_{q\in[1:K]\setminus i, k\in[1:N],\mathbf{s}^{[\sf u]}\in \mathcal{S}_T^{[\sf u]}}\right),\nonumber \\
&\text{span}\left(\left\{\mathbf{\bar{G}}_{qp,i}^{-1}\mathbf{\bar{G}}_{qp,q}\mathbf{\bar{v}}_{qp}^{{[\sf bs]}(a')}\right\}_{ q\in[1:K]\setminus i, p\in[1:N],a'\in[1:\frac{\lambda_2}{\lambda_1}T^{KN(KM+KN)}]}\right),\nonumber \\
&\text{span}\left(\{\mathbf{\bar{G}}_{ik,i}^{-1}\bar{\mathbf{w}}_{a',ik}\}_{k\in[1:N]\setminus j,a'\in[1:\frac{\lambda_2}{\lambda_1}T^{KN(KM+KN)}]}\right).
\end{align}
Note that the total of $((K-1)N+(K-1)N\frac{\lambda_2}{\lambda_1}+(N-1)\frac{\lambda_2}{\lambda_1})T^{KN(KM+KN)}$ vectors are in \eqref{condition_Scheme2} and $\mathbf{\bar{v}}_{ij}^{[{\sf bs}](a)}$ is a vector with $Md=\frac{M}{\lambda_1}(T+1)^{KN(KM+KN)}$ elements. Hence, we can choose linearly independent $\left\{\mathbf{\bar{v}}_{ij}^{[{\sf bs}](a)}\right\}_{a\in[1:\frac{\lambda_2}{\lambda_1}T^{KN(KM+KN)}]}$ orthogonal with the vector spaces in \eqref{condition_Scheme2} if
\begin{align}
&\frac{M}{\lambda_1}(T+1)^{KN(KM+KN)}-\left((K-1)NT^{KN(KN+KM)}+(KN-1)\frac{\lambda_2}{\lambda_1}T^{KN(KM+KN)}\right)\nonumber\\
&>\frac{\lambda_2}{\lambda_1}T^{KN(KM+KN)},
\end{align}
which is satisfied from the assumption $(K-1)N\lambda_1+KN \lambda_2\leq M$ in \eqref{eq:Dof2}. Hence, we can set beamforming vectors of BSs for nulling DL intra-cell, DL inter-cell, and BS-to-BS interferences.

\subsubsection{Decoding at BSs} Since $\left\{\mathbf{\bar{v}}_{ij}^{[{\sf bs}](a)}\right\}_{i\in[1:K], j\in[1:N], a\in[1:\frac{\lambda_2}{\lambda_1}T^{KN(KM+KN)}]}$ is designed to partially null out BS-to-BS interferences, they can be removed by receive zero-forcing at each BS. 
In addition, recall that the dimension of UL inter-cell interferences is at most 
\begin{align}
\min\{(K-1)N,M\} (T+1)^{KN(KN+KM)}
\end{align}
 from the proposed UL IA. Therefore, $\left\{\mathbf{\bar{F}}_{i,ij}\mathbf{\bar{v}}_{ij}^{[\sf u](\mathbf{s}^{[\sf u]})}\right\}_{j\in[1:N], \mathbf{s}^{[\sf u]}\in \mathcal{S}_T^{[\sf u]}}$ is a set of linearly independent vectors desired for BS $i$ and also linearly independent of UL inter-cell interferences if
\begin{align}
&NT^{KN(KM+KN)}+\min\{(K-1)N,M\} (T+1)^{KN(KM+KN)}\leq \frac{M}{\lambda_1}(T+1)^{KN(KM+KN)},
\end{align}
which is satisfied from the assumption $\lambda_1(N+\min\{M,K(N-1)\}\leq M$ in \eqref{eq:Dof2}. Therefore, BS $i$ can recover its desired UL messages.

\subsubsection{Decoding at DL users}
 Consider the decoding at DL user $(i,j)$. Since
 \begin{align}
\left\{\mathbf{\bar{v}}_{qp}^{[{\sf bs}](a)}\right\}_{q\in[1:K], p\in[1:N], a\in[1:\frac{\lambda_2}{\lambda_1}T^{KN(KM+KN)}]~\textrm{except $q=i$ and $p=j$}}
\end{align}
is designed to null out DL intra-cell and DL inter-cell interferences, they can be removed by receive zero-forcing at DL user $(i,j)$.


Now we examine the dimension and linear independence property of the desired signals. First, notice that  $\left\{\mathbf{\bar{v}}_{ij}^{{[\sf bs]}(a)}\right\}_{a\in[1:\frac{\lambda_2}{\lambda_1}T^{KN(KM+KN)}] }$ is a set of linearly independent vectors whose cardinality is given by $\frac{\lambda_2}{\lambda_1}T^{KN(KM+KN)}$, which is constructed independent of $\mathbf{\bar{G}}_{ij,i}$. Therefore, 
\begin{align}
\left\{\mathbf{\bar{G}}_{ij,i}\mathbf{\bar{v}}_{ij}^{{[\sf bs]}(a)}\right\}_{a\in[1:\frac{\lambda_2}{\lambda_1}T^{KN(KM+KN)}] },
\end{align}
the desired DL signals for DL user $(i,j)$, is also a set of linearly independent vectors whose cardinality is given by $\frac{\lambda_2}{\lambda_1}T^{KN(KM+KN)}$. Moreover, recall that the dimension of user-to-user interferences is at most $(T+1)^{KN(KN+KM)}$ via the proposed UL IA. Since channels are generic, each of $\left\{\mathbf{\bar{G}}_{ij,i}\mathbf{\bar{v}}_{ij}^{{[\sf bs]}(a)}\right\}_{a\in[1:\frac{\lambda_2}{\lambda_1}T^{KN(KM+KN)}] }$ is linearly independent of user-to-user interferences if
\begin{align}
&\frac{\lambda_2}{\lambda_1}T^{KN(KM+KN)}+(T+1)^{KN(KN+KM)}\leq \frac{1}{\lambda_1}(T+1)^{KN(KM+KN)}
\end{align}
which is satisfied from the assumption $\lambda_1+\lambda_2\leq 1$ in \eqref{eq:Dof2}. Therefore, DL user $(i,j)$ can recover its desired DL message.

\subsubsection{Achievable DoF}
Since total $KN(1+\frac{\lambda_2}{\lambda_1})T^{KN(KM+KN)}$ submessages are delivered during $nd=\frac{n}{\lambda_1}(T+1)^{KN(KM+KN)}$ time slots via a length-$n$ codeword each, the sum DoF of
\begin{align} \label{eq:Dof2_converse}
&\frac{KN(1+\frac{\lambda_2}{\lambda_1})T^{KN(KM+KN)}}{\frac{1}{\lambda_1}(T+1)^{KN(KM+KN)}}
\end{align}
is achievable under the constraints in \eqref{eq:Dof2}. Finally, since \eqref{eq:Dof2_converse} converges to $KN(\lambda_1+\lambda_2)$ as $T$ increases, the sum DoF in \eqref{eq:Dof2} is achievable.

\subsection{Optimal $(\lambda_1, \lambda_2)$ and $\max(d_{\Sigma,1}, d_{\Sigma,2})$}
In the following, we solve the two linear programs \eqref{eq:Dof1} and \eqref{eq:Dof2} and then establish the sum DoF lower bound in Theorem \ref{thm:dof_1}.


First, consider the case where $M<(K-1)N$. The feasible $(\lambda_1, \lambda_2)$ regime of \eqref{eq:Dof1} is illustrated in Fig. \ref{Fig:feasible1} showing that $d_{\Sigma,1}=KN\left(\frac{M^2+MN}{M^2+N^2+MN}\right)$.  The feasible $(\lambda_1, \lambda_2)$ regime of \eqref{eq:Dof2} is illustrated in Fig. \ref{Fig:feasible2} showing that $d_{\Sigma,2}=\frac{MK}{(K-1)}$ for $M\leq(K-2)N$ and $d_{\Sigma,2}=\frac{2MN+M^2}{M+N}$ for $M> (K-2)N$.
Hence, for $M<(K-1)N$,
\begin{align}\label{dof_achievable1}
\max\{d_{\Sigma,1}, d_{\Sigma,2}\}=\left\{\begin{array}{lll}d_{\Sigma,1}= KN\left(\frac{M^2+MN}{M^2+N^2+MN}\right) \quad \text{if $K\geq \frac{M}{N}+1+\frac{N^2}{(M+N)^2} $},\\
d_{\Sigma,2}= \frac{2MN+M^2}{M+N} \quad\quad\quad\quad\quad~\text{if $\frac{M}{N}+1
< K\leq \frac{M}{N}+1+\frac{N^2}{(M+N)^2}$. }
\end{array} \right.
\end{align}

%
\begin{figure}[t]
\footnotesize
\begin{center}
~\subfigure[ Feasible $(\lambda_1,\lambda_2)$ of \eqref{eq:Dof1}]{%
            \label{Fig:feasible1}
            \includegraphics[width=0.3\textwidth]{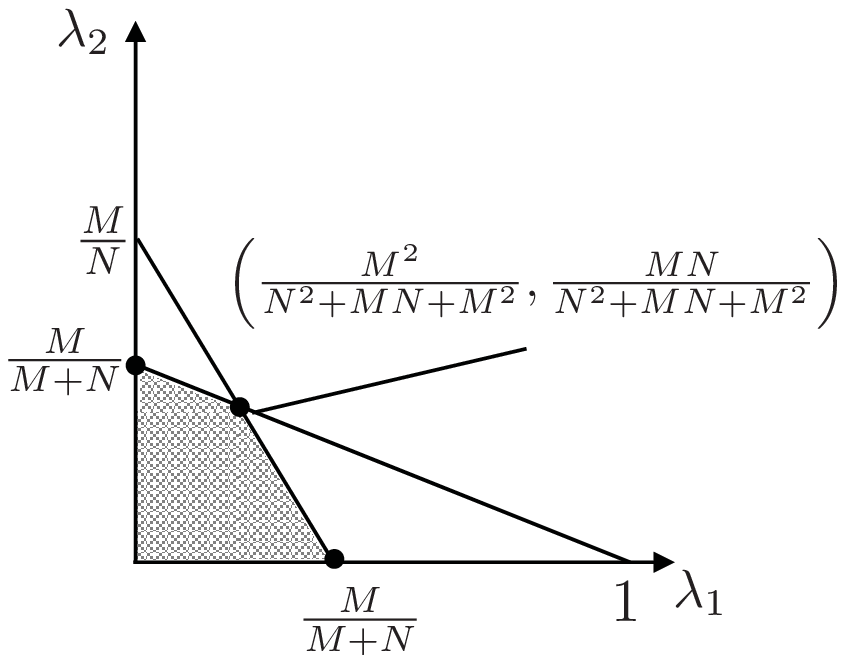}
        }~~~~~%
\subfigure[ Feasible $(\lambda_1,\lambda_2)$ of \eqref{eq:Dof2}]
{%
    \label{Fig:feasible2}
   \includegraphics[width=0.52\textwidth]{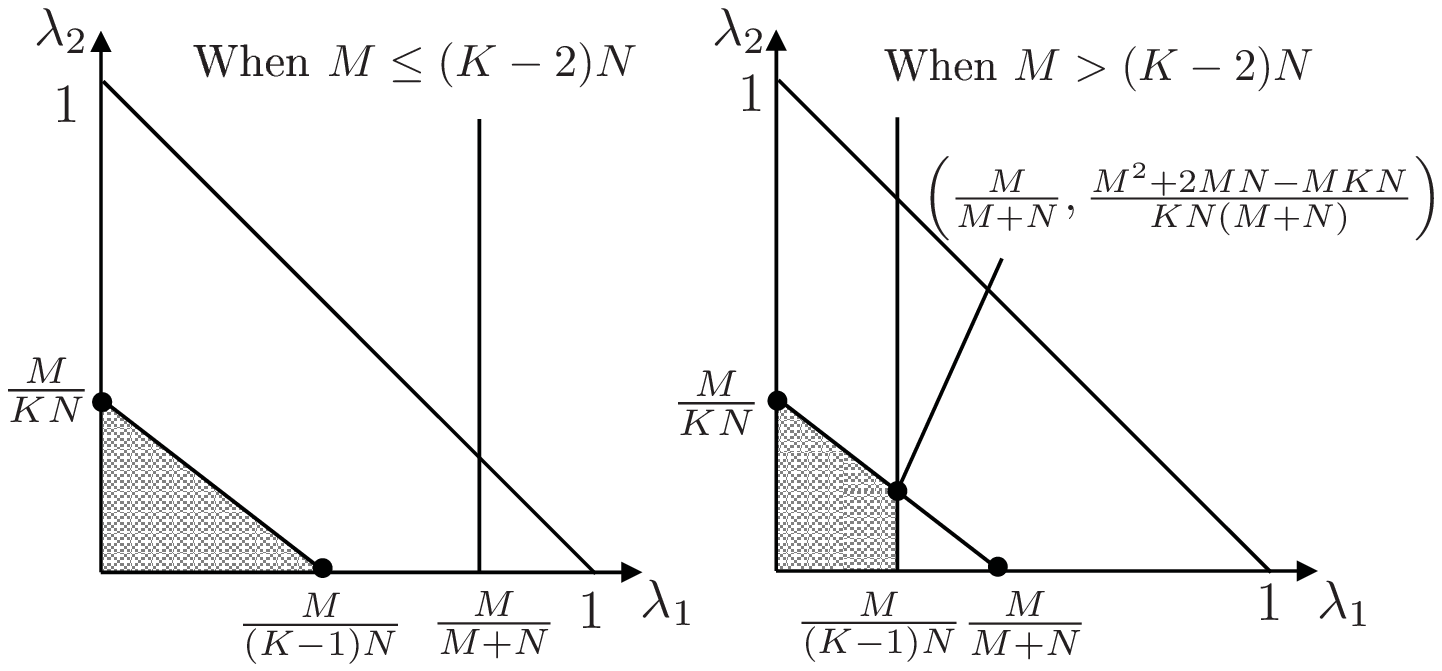}
    }
\end{center}
\caption{Feasible $(\lambda_1,\lambda_2)$  when $M>(K-1)N$.}

\end{figure}

\begin{figure}[t]
\footnotesize
\begin{center}
~\subfigure[ Feasible $(\lambda_1,\lambda_2)$ of \eqref{eq:Dof1}]{%
            \label{Fig:feasible3}
            \includegraphics[width=0.34\textwidth]{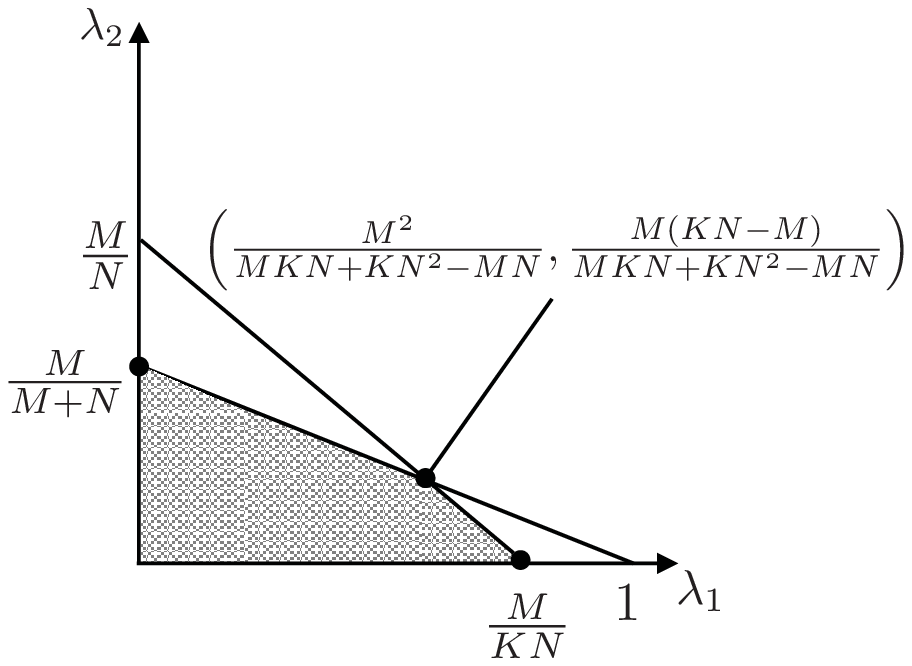}
        }~~~~~~%
\subfigure[ Feasible $(\lambda_1,\lambda_2)$ of \eqref{eq:Dof2}]
{%
    \label{Fig:feasible4}
   \includegraphics[width=0.61\textwidth]{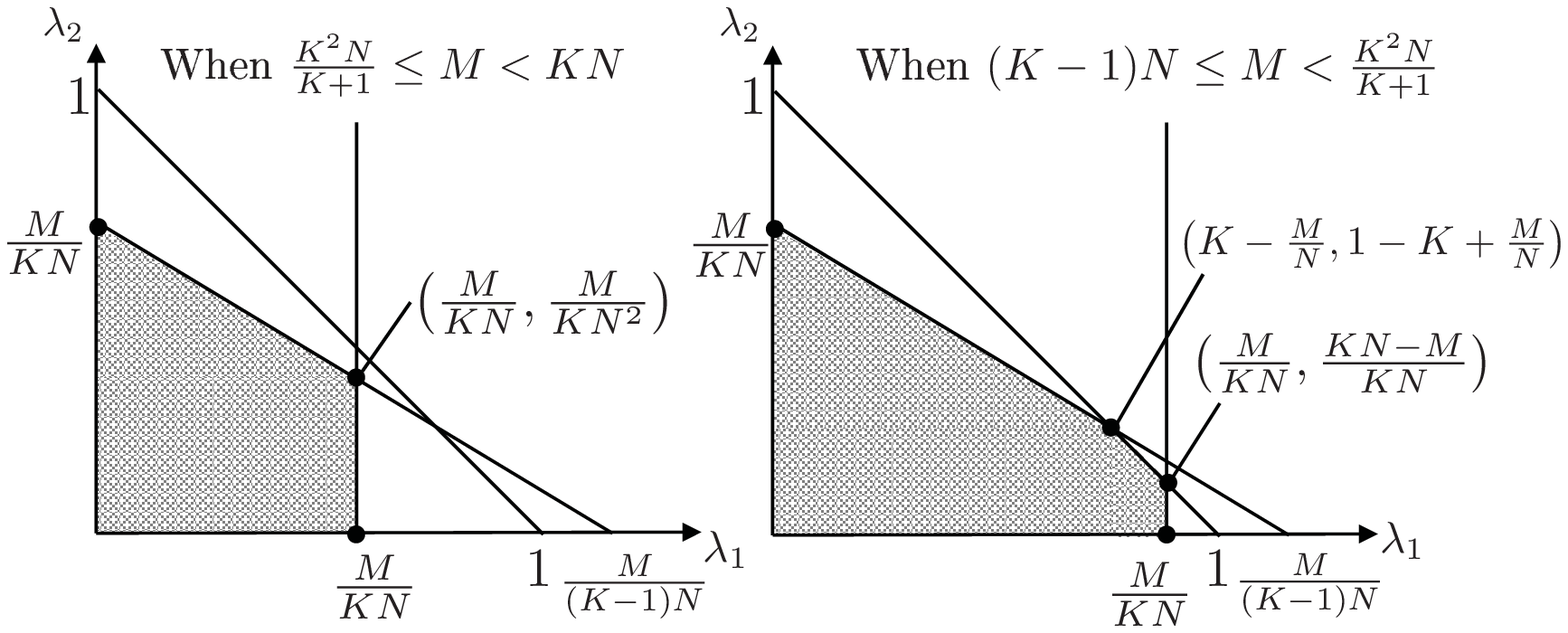}
    }
\end{center}
\caption{Feasible $(\lambda_1,\lambda_2)$  when $(K-1)N\leq M<KN$.}
\label{Fig:feasible3_4}
\end{figure}

Second, consider the case where $(K-1)N\leq M<KN$. Similarly, the feasible $(\lambda_1, \lambda_2)$ regime is given by Fig. \ref{Fig:feasible3_4} and by solving \eqref{eq:Dof1} and \eqref{eq:Dof2}, we have
\begin{align}\label{dof_achievable2}
\max\{d_{\Sigma,1}, d_{\Sigma,2}\}=d_{\Sigma,2}=\left\{\begin{array}{lll} M+\frac{M}{K}\quad \text{if $(K-1)N\leq M<\frac{K^2N}{K+1}$,}\\
~~KN~~ \quad \text{if $\frac{K^2N}{K+1}\leq M<KN$.}
\end{array} \right.
\end{align}



In summary, from \eqref{dof_achievable1}, \eqref{dof_achievable2}, and the fact that $d_{\Sigma}=KN$ for $M\geq KN$, the following sum DoF is achievable:
\begin{align}\label{dof_achievable_temp}
&\left\{\begin{array}{lll} KN\left(\frac{M^2+MN}{M^2+N^2+MN}\right) {~}\text{if $M< (K-1)N$ and $K\geq\frac{M}{N}+1+\frac{N^2}{(M+N)^2}$,}\\
\frac{2MN+M^2}{M+N}\quad\quad\quad\quad{~~}~\text{if $M<(K-1)N$ and $K\leq \frac{M}{N}+1+\frac{N^2}{(M+N)^2}$,}\\
M+\frac{M}{K} \quad\quad\quad\quad\quad~ ~\text{if $(K-1)N\leq M<\frac{K^2N}{K+1}$,}\\
~~KN~~ \quad\quad\quad\quad\quad~~ \text{otherwise.}
\end{array} \right.
\end{align}
Finally, when $(K-2)N<M<(K-1)N$, we can activate only $K-1$ cells out of $K$ cells to achieve the sum DoF of $\min\left\{{M}+\frac{M}{K-1}, (K-1)N\right\}$ from the third and fourth cases in \eqref{dof_achievable_temp}, which is greater than  
\begin{align}
\max\left\{KN\left(\frac{M^2+MN}{M^2+N^2+MN}\right),\frac{2MN+M^2}{M+N}\right\}
\end{align}
under certain conditions. Therefore, the achievable sum DoF is finally given by
\begin{align}\label{dof_achievable}
&d_{\Sigma}\geq \left\{\begin{array}{lll}
KN\left(\frac{M^2+MN}{M^2+N^2+MN}\right)~~\quad\quad\quad\quad\quad\quad {~}~\text{if $M\leq (K-2)N$, }\\
\max\left\{KN\left(\frac{M^2+MN}{M^2+N^2+MN}\right),\frac{2MN+M^2}{M+N},\min\left\{\frac{MK}{K-1}, (K-1)N\right\}\right\} \\
\quad\quad\quad\quad\quad\quad\quad\quad\quad\quad\quad\quad\quad\quad\quad\quad\text{if $(K-2)N <M< (K-1)N$,}\\
M+\frac{M}{K}\quad \quad\quad\quad\quad\quad\quad\quad\quad\quad\quad\quad~~\text{if $(K-1)N\leq M<\frac{K^2N}{K+1}$,}\\
KN~~~\quad\quad\quad\quad\quad\quad\quad\quad\quad\quad\quad\quad\quad \text{otherwise.}
\end{array} \right.
\end{align}
which completes the proof of Theorem \ref{thm:dof_1}.


\section{Discussion}\label{sec:discussion}
In this section, we briefly discuss about the impacts of BS-to-BS interference and self-interference within each BS in terms of DoF.

\begin{figure}[t!]
\begin{center}
\includegraphics[width=0.55\textwidth]{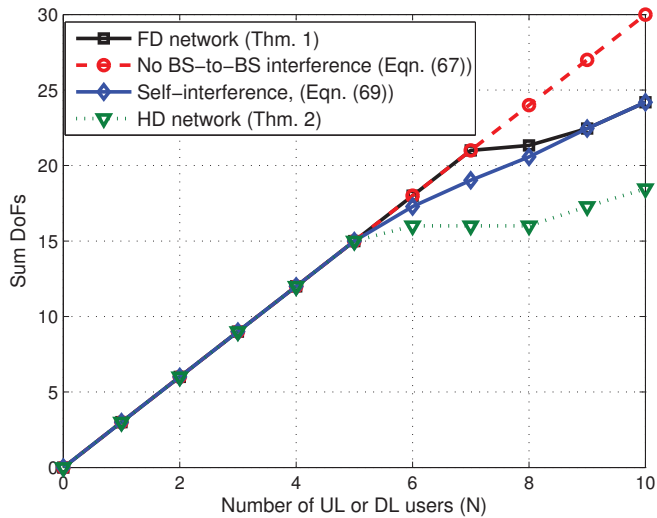}
\vspace{-0.1in}
\end{center}
\caption{Sum DoFs with respect to $N$ when $M=16$ and $K=3$.}
\label{Fig:InterBS_Self}
\vspace{-0.1in}
\end{figure}

\subsection{DoF Loss due to BS-to-BS Interferences}
Suppose that there is no BS-to-BS interference, i.e., $\mathbf{B}_{ij}=0$, $\forall i,j\in[1:K]$. In this case, the sum DoF achieved by the second proposed scheme in Section \ref{subsec:scheme2} can be improved since BSs now only need to null out DL intra-cell and inter-cell interferences. Specifically, by following a similar step, the sum DoF of
\begin{align}\label{eq:Dof2_no_inter_BS}
\max_{\substack{
\lambda_1,\lambda_2\in(0,1]\\
            \lambda_1(N+\min\{M,K(N-1)\}\leq M\\
            \lambda_2KN\leq M\\
            \lambda_1+\lambda_2\leq 1\\
           }}KN(\lambda_1+\lambda_2)
            \end{align}
is achievable in the absence of BS-to-BS interference. By solving the above linear program, the sum DoF of
\begin{align}\label{dof_achievable_no_inter_BS}
\min\left\{KN, \max\left\{M+\frac{KMN}{M+N}, 2M\right\}\right\}
\end{align}
is achievable, which is obviously greater than or equal to the lower bound in Theorem \ref{thm:dof_1}.

\subsection{Impacts of Self-Interference on DoF}
Throughout the paper, we assume no self-interference within each BS. However, for practical FD BSs, residual self-interference may exist due to imperfect or insufficient self-interference suppression and cancellation. Note that by interpreting self-interference as another additional BS-to-BS interference, the sum DoF achieved by the first proposed scheme in Section \ref{subsec:scheme2} is still achievable in the presence of self-interference by modifying~\eqref{eq:Tx_BF1_interference_DL} in Section \ref{subsec:scheme2} as
\begin{align}
\beta_{j,p_1,p_2}[t]=\left\{\begin{array}{lll} &g_{ik,lq}[t] \quad~~\text{if $p_1\leq KN$ and $i\neq l$,}\\
&g_{ik,lq}[t] \quad~~\text{if $p_1\leq KN$ and $j\neq k$,}\\
&b_{rs,lq}[t]  \quad~~ \text{if $p_1>KN$,}\\
&1  \quad\quad\quad~~~ \text{otherwise, }
\end{array} \right.
\end{align} 
where $i=\lceil\frac{p_1}{N}\rceil$, $k=p_1-N(\lceil\frac{p_1}{N}\rceil-1)$, $l=\lceil\frac{p_2}{M}\rceil$, $q=p_2-M(\lceil\frac{p_2}{M}\rceil-1)$, $r=\left\lceil\frac{p_1-KN}{M}\right\rceil$, $s=p_1-M(\left\lceil\frac{p_1-KN}{M}\right\rceil-1)$, while the sum DoF achieved by the first proposed scheme in Section \ref{subsec:scheme1} collapses to $\min\{M, KN\}$. Therefore, in this case, the following sum DoF is achievable:
 \begin{align}\label{eq:Dof1_self}
\left\{\begin{array}{lll}
KN\left(\frac{M^2+MN}{M^2+N^2+MN}\right)\quad\quad~\text{if $M\leq (K-1)N$, }\\
KN\left(\frac{MK}{MK+KN-M}\right)~\quad{~~}~\text{if $(K-1)N\leq M\leq KN$,}\\
KN~~\quad\quad\quad\quad\quad\quad\quad\quad{~}\text{otherwise,}
\end{array} \right.
\end{align}
which is obviously smaller than or equal to the lower bound in Theorem \ref{thm:dof_1}.

\subsection{Numerical Examples}
To see the impacts of BS-to-BS interference and self-interference within each BS, 
consider an example network where $K=3$ and $M=16$.
Figure~\ref{Fig:InterBS_Self} plots the sum DoFs in Theorems \ref{thm:dof_1} and \ref{thm:HD_case},  \eqref{dof_achievable_no_inter_BS}, and \eqref{eq:Dof1_self} with respect to $N$.
It is observed that when
$N$ is relatively small, the sum DoF in Theorem \ref{thm:dof_1} is the same as that for the case of no BS-to-BS interference, i.e., \eqref{dof_achievable_no_inter_BS}, because each BS is able to null out all DL intra-cell , DL inter-cell, BS-to-BS interferences for the second proposed scheme in Section \ref{subsec:scheme2}.
However, as $N$ increases, the sum DoF in Theorem \ref{thm:dof_1} eventually collapses to that for the case where there exists self-interference. i.e., \eqref{eq:Dof1_self}.
This is due to the fact that as the number of DL intra-cell and DL inter-cell interference links increases, IA at each BS (the first proposed scheme in Section \ref{subsec:scheme1}) outperforms interference nulling at each BS (the second proposed scheme in Section \ref{subsec:scheme2}) and IA at each BS can be done with or without self-interference.
More importantly, even if there exists self-interference within each BS, the proposed schemes can be straightforwardly modified by treating it as additional BS-to-BS interference and improve the sum DoF compared to the conventional HD cellular network; see Theorem \ref{thm:HD_case} and \eqref{eq:Dof1_self}. It is in contrast to the single-cell case in which FD operation at the BS does not improve the sum DoF in the presence of self-interference.


\section{Conclusion}
In this paper, we have studied the sum DoFs of multicell cellular networks consisting of multiantenna FD BS and single-antenna HD mobile users. Compared to previous works on the single-cell FD network~\cite{Jeon--Chae--Lim2015_ISIT} and the multicell HD network~\cite{Sridharan:13}, for the considered network, we additionally need to take into account for inter-cell interferences from multicell spectrum sharing and UL-to-DL interferences from FD. A novel interference management scheme has been proposed for mitigating all different types of intra-cell, inter-cell, user-to-use, and BS-to-BS interferences and the corresponding achievable DoF has been established for a general network configuration with respect to the number of antennas at each BS, the number of cells, and the number of users in each cell. For the converse part, we also derived an upper bound on the sum DoF, which is tight under certain conditions. The results show 
significant throughput improvement over conventional cellular networks, 
thereby providing theoretical evidence of performance improvement by jointly utilization of spectrum sharing and FD.

Our work can be extended to several interesting further directions. For example, regarding CSI, we may consider the case in which the channel state information at the transmitters (CSIT) is delayed or partially known in order to show whether spectrum sharing and FD can still increase the system throughput and how the DoF of the multicell FD network behaves in the lack of CSI. Another interesting direction would be extended to the case in which mobile users have multiple antennas, in order to see the tendency of DoF improvement with respect to the number antennas at the users.


\end{document}